\DeclareRobustCommand{\VAN}[3]{#2} 
\tikzset{every picture/.style={>=stealth'}}
\pgfplotsset{compat=1.7}
\newenvironment{customlegend}[1][]{%
    \begingroup
    \pgfplots@init@cleared@structures
    \pgfplotsset{#1}%
}{%
    \pgfplots@createlegend
    \endgroup
}%
\def\addlegendimage{\pgfplots@addlegendimage}
\colorlet{adiar-skip_transpose}{green!50!cyan!80!black}
\colorlet{adiar-pruning_and}{cyan!80!black}
\colorlet{adiar-exists_replace}{blue!40!purple}
\colorlet{adiar-shift_replace}{red}
\tikzstyle{dots_adiar-naive}=[solid, color=black]
\tikzstyle{dots_adiar-skip_transpose}=[only marks, mark=diamond, mark size=2.2pt, mark options={color=adiar-skip_transpose, fill=adiar-skip_transpose, line width=1pt, opacity=0.5}]
\tikzstyle{dots_adiar-pruning_and}=[only marks, mark=halfdiamond*, mark size=2.2pt, mark options={color=adiar-pruning_and, fill=adiar-pruning_and, line width=1pt, opacity=0.5}]
\tikzstyle{dots_adiar-exists_replace}=[only marks, mark=halfdiamond*, mark size=2.2pt, mark options={rotate=180, color=adiar-exists_replace, fill=adiar-exists_replace, line width=1pt, opacity=0.5}]
\tikzstyle{dots_adiar-shift_replace}=[only marks, mark=diamond*, mark size=2.2pt, mark options={color=adiar-shift_replace, fill=adiar-shift_replace, line width=1pt, opacity=0.5}]
\tikzstyle{x_adiar-skip_transpose}=[only marks, mark=+, mark size=2.2pt, mark options={color=adiar-skip_transpose, fill=adiar-skip_transpose, opacity=0.7}]
\tikzstyle{x_adiar-pruning_and}=[only marks, mark=+, mark size=2.2pt, mark options={color=adiar-pruning_and, fill=adiar-pruning_and, opacity=0.7}]
\tikzstyle{x_adiar-exists_replace}=[only marks, mark=+, mark size=2.2pt, mark options={color=adiar-exists_replace, fill=adiar-exists_replace, opacity=0.7}]
\tikzstyle{x_adiar-shift_replace}=[only marks, mark=+, mark size=2.2pt, mark options={color=adiar-shift_replace, fill=adiar-shift_replace, opacity=0.7}]
\colorlet{buddy}{blue!40!purple}
\colorlet{cal}{blue!80!cyan}
\colorlet{cudd}{orange}
\colorlet{libbdd}{green!60!black}
\colorlet{sylvan}{cyan}
\tikzstyle{dots_buddy}=[only marks, mark=pentagon*, mark size=2.2pt, mark options={color=buddy, fill=buddy, opacity=0.6}]
\tikzstyle{dots_cal}=[only marks, mark=triangle*, mark size=2.2pt, mark options={rotate=180, color=cal, fill=cal, opacity=0.6}]
\tikzstyle{dots_cudd}=[only marks, mark=triangle*, mark size=2.2pt, mark options={color=cudd, fill=cudd, opacity=0.6}]
\tikzstyle{dots_libbdd}=[only marks, mark=square*, mark size=2.2pt, mark options={color=libbdd, fill=libbdd, opacity=0.6}]
\tikzstyle{dots_sylvan}=[only marks, mark=*, mark size=2.2pt, mark options={color=sylvan, fill=sylvan, opacity=0.6}]
\tikzstyle{plot_adiar}=[color=adiar-shift_replace, line width=0.7pt, mark=diamond*, mark size=2.2pt, mark options={color=adiar-shift_replace, fill=adiar-shift_replace, line width=1pt, opacity=0.5}]
\tikzstyle{plot_buddy}=[color=buddy, line width=0.7pt, mark=pentagon*, mark size=2.2pt, mark options={color=buddy, fill=buddy, opacity=0.6}]
\tikzstyle{x_buddy}=[only marks, mark=asterisk, mark size=2.3pt, mark options={color=buddy, fill=buddy, opacity=1}]
\newcommand{\markSkipTranspose}[0]{%
  \protect\tikz[x=1.2ex,y=1.85ex,line width=.1ex,line join=round, baseline=1.7pt]{%
    \draw[color=adiar-skip_transpose, line width=1pt, opacity=0.6]%
      (0,0.5) -- (0.5,1) -- (1,0.5) -- (0.5,0) -- (0,0.5) -- cycle;%
  }%
}
\newcommand{\markPruningAnd}[0]{%
  \protect\tikz[x=1.2ex,y=1.85ex,line width=.1ex,line join=round, baseline=1.7pt]{%
    \fill[color=adiar-pruning_and, opacity=0.6]%
      (1,0.5) -- (0.5,0) -- (0,0.5) -- cycle;%
    \draw[color=adiar-pruning_and, line width=1pt=1pt, opacity=0.6]%
      (0,0.5) -- (0.5,1) -- (1,0.5) -- (0.5,0) -- (0,0.5) -- cycle;%
  }%
}
\newcommand{\markExistsReplace}[0]{%
  \protect\tikz[x=1.2ex,y=1.85ex,line width=.1ex,line join=round, baseline=1.7pt]{%
    \fill[color=adiar-exists_replace, opacity=0.6]%
      (0,0.5) -- (0.5,1) -- (1,0.5) -- cycle;%
    \draw[color=adiar-exists_replace, line width=1pt, opacity=0.6]%
      (0,0.5) -- (0.5,1) -- (1,0.5) -- (0.5,0) -- (0,0.5) -- cycle;%
  }%
}
\newcommand{\markShiftReplace}[0]{%
  \protect\tikz[x=1.2ex,y=1.85ex,line width=.1ex,line join=round, baseline=1.7pt]{%
    \draw[color=adiar-shift_replace, fill=adiar-shift_replace, line width=1pt, opacity=0.6]%
      (0,0.5) -- (0.5,1) -- (1,0.5) -- (0.5,0) -- (0,0.5) -- cycle;%
  }%
}
\newcolumntype{H}{>{\setbox0=\hbox\bgroup}c<{\egroup}@{}}
\newcommand{\N}[0]{\ensuremath{\mathbb{N}}}
\newcommand{\Oh}[1]{\ensuremath{\mathcal{O} ( #1 )}}
\newcommand{\sort}[1]{\text{sort} ( #1 )}
\newcommand{\arc}[3][solid]{
  \ensuremath{#2}
  \,
  \tikz[baseline=-\the\dimexpr\fontdimen22\textfont2\relax]{
    \draw[->, #1](0,0) -- ++(1.2em,0);
  }
  \,
  \ensuremath{#3}
}
\def\orcidID#1{\smash{\href{http://orcid.org/#1}{\protect\raisebox{-1.25pt}{\protect\includegraphics{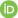}}}}}
\spnewtheorem{optimisation}{Optimisation}{\bfseries}{\itshape}
\spnewtheorem*{recommendation}{Recommendation}{\bfseries}{\itshape}
\newcommand*{\mailto}[1]{\href{mailto:#1}{\nolinkurl{#1}}}
\def\arxiv{1}
\title{Symbolic Model Checking\\in External Memory}
\titlerunning{Relational Product of BDDs in External Memory}
\author
{
  Steffan Christ S{\o}lvsten
  \orcidID{0000-0003-0963-6569}
  \and
  Jaco van de Pol
  \orcidID{0000-0003-4305-0625}
}
\authorrunning
{
  S. C. S{\o}lvsten and J. van de Pol
}
\institute
{
  Aarhus University, Denmark
  \texttt{\{%
    \href{mailto:soelvsten@cs.au.dk}{\color{black} soelvsten},%
    \href{mailto:jaco@cs.au.dk}{\color{black} jaco}%
    \}@cs.au.dk}
}
\begin{document}

\maketitle


\begin{abstract}
  We extend the external memory BDD package Adiar with support for monotone variable substitution.
  Doing so, it now supports the relational product operation at the heart of symbolic model
  checking. We also identify additional avenues for merging variable substitution fully and the
  conjunction operation partially inside the relational product's existential quantification step.
  For smaller BDDs, these additional ideas improve the running of Adiar for model checking tasks up
  to $47\%$. For larger instances, the computation time is mostly unaffected as it is dominated by
  the existential quantification.

  Adiar's relational product is about one order of magnitude slower than conventional depth-first
  BDD implementations. Yet, its I/O-efficiency allows its running time to be virtually independent
  of the amount of internal memory. This allows it to compute on BDDs with much less internal memory
  and potentially to solve model checking tasks beyond the reach of conventional implementations.

  Compared to the only other external memory BDD package, CAL, Adiar is several orders of magnitude
  faster when computing on larger instances.

  \keywords{
    Time-forward Processing \and
    External Memory Algorithms \and
    Binary Decision Diagrams \and
    Symbolic Model Checking
  }
\end{abstract}


\section{Introduction} \label{sec:introduction}

Binary Decision Diagrams~\cite{Bryant1986} (BDDs) are a concise and canonical representation of
$n$-ary Boolean functions as directed acyclic graphs. Starting with
\cite{Bose1989,Coudert1990,Burch1992}, BDDs have become a popular tool for symbolic model checking.
To this day, they are still used for model checking
probabilistic~\cite{Kwiatkowska2002,Hensel2022,Fu2022} and multi-agent
systems~\cite{Lomuscio2017,Gammie2004,Su2007,Fu2022} and CTL$^*$ formulas~\cite{Amparore2020}.
Furthermore, they are also still used for verifying network
configurations~\cite{AlShaer2009,AlShaer2010,Lopes2013,Brown2023},
circuits~\cite{Kaivola2009,Kaivola2022:FMCAD,Kaivola2022:HoCA}, and feature
models~\cite{Czarnecki2007,Dubslaff2024}. They also have found recent use for the generation of
extended resolution proofs for SAT and QBF problems \cite{Bryant2021:SAT,Bryant2021:QBF,Bryant2022}.
Furthermore, recent research efforts have made progress on fundamental BDD-based procedures, e.g.\
\cite{Larsen2023,Jakobsen2024}, and the very implementation of BDDs, e.g.\
\cite{Beyer2021,Soelvsten2022:TACAS,Husung2024}

The conciseness of BDDs mitigate the state space explosion problem of model checking. Yet, there is
an inherent lower bound to how much space (or computation time) is needed to meaningfully
encapsulate the state space. Hence, the size of the BDDs grows together with the size and the
complexity of the model under verification. Inevitably, BDDs must outgrow the machine's RAM for some
models. Yet, most implementations use recursion and hash tables for memoisation
\cite{Lind1999,Somenzi2015,Dijk2016,Beyer2021,Husung2024}. Both recursion and the hash tables
introduce cache misses \cite{Klarlund1996,Pastva2023}. As the BDDs outgrow the RAM, these cache
misses turn into memory swaps which slows computation down by several orders of magnitude
\cite{Soelvsten2022:TACAS}. This puts an upper limit on what BDDs can solve in practice.

Unlike conventional BDD implementations, Adiar~\cite{Soelvsten2022:TACAS} is designed to handle BDDs
that are too large for the RAM. To this end, it replaces the conventional approach of memoised
recursion with time-forward processing~\cite{Chiang1995,Arge1995:2} algorithms. Unlike depth-first
recursion \cite{Arge1995:1}, this technique is efficient in the I/O-model~\cite{Aggarwal1987} of
Aggarwal and Vitter. This I/O-efficiency, in turn, allows Adiar in practice to process large BDDs
without being affected by the disk's speed. Using this technique is only at the cost of a small
overhead to its running time \cite{Soelvsten2022:TACAS}.

\subsection{Contributions}

In \cite{Soelvsten2024:arXiv}, we extended the external memory BDD package Adiar with efficient
multi-variable quantification, based on the notion of nested sweeps. The framework in
\cite{Soelvsten2024:arXiv} was evaluated on quantifiers that occur in Quantified Boolean Formulas.
In this paper, we evaluate the effectiveness of nested sweeps in the context of symbolic model
checking. This requires some extensions to Adiar in order to improve the \emph{relational product}.
This operation is needed for the computation of symbolic successors and predecessors of a set of
states. We show in \cref{sec:replace} how monotone variable substitutions can be piggy-backed ``for
free'' onto Adiar's other algorithms. Furthermore, we show in \cref{sec:and-exists} how to combine
the conjunction and quantification operations to further improve the running time of the relational
product. Finally, \cref{sec:conclusion} identifies future work and provides recommendations based on
our experimental results in \cref{sec:experiments}.

\section{Preliminaries} \label{sec:preliminaries}

\subsection{I/O Model}

To make algorithmic analysis tractable, Aggarwal and Vitter's I/O-model~\cite{Aggarwal1987} is an
abstraction of the machine's complex memory hierarchy. This model consists of two levels of memory:
the \emph{internal memory}, e.g.\ the RAM, and the \emph{external memory}, e.g.\ the disk. To
compute on some data, it has to reside in internal memory. Yet, whereas the external memory is
unlimited, the internal memory has only space for $M$ elements. Hence, if the input of size $N$ or
some auxiliary data structure exceeds $M$ then data has to be offloaded to external memory. Yet,
each data transfer between the two levels, i.e. each read and write, consists of $B$ sized
\emph{blocks} of consecutive data. Intuitively, an algorithm is I/O-efficient if it makes sure to
use a substantial portion of the $B$ elements in each block that has been read.

An algorithm's I/O-complexity is the number of data transfers, I/Os, it uses. For example, reading
an input sequentially requires $N/B$ I/Os whereas random access costs up to $N$ I/Os. Furthermore,
one can sort $N > M$ elements in $\Theta(\sort{N}) \triangleq \Theta(N/B \cdot \log_{M/B}(N/B))$
I/Os \cite{Aggarwal1987}. For most realistic values of $N$, $M$, and $B$, $N/B < \sort{N} \ll N$.

\subsection{Binary Decision Diagrams}

Binary Decision Diagrams~\cite{Bryant1986} (BDDs), based on \cite{Lee1959,Akers1978}, represent
Boolean formul{\ae} as singly-rooted binary directed acyclic graphs. These consist of two
sink nodes (\emph{terminals}) with the Boolean values $\top$ and $\bot$. Each non-sink node
(\emph{BDD node}) contains a decision variable, $x_i$, together with two children, $v_{\top}$ and
$v_{\bot}$. Together, these three values represent the if-then-else decision
$x_i\ ?\ v_{\top}\ :\ v_{\bot}$. Hence, the BDD represents an $n$-ary Boolean formula. In
particular, each path from the root to the $\top$ terminal represents one (or more) assignments for
which the function outputs $\top$. For example, \cref{fig:bdd_example:init} represents the formula
$x_1 \land \neg x_2$.

What are colloquially called BDDs are in particular \emph{ordered} and \emph{reduced} BDDs. A BDD is
ordered, if the decision variables occur only once on each path and always according to the same
ordering \cite{Bryant1986}. An ordered BDD is furthermore reduced if it neither contains duplicate
subgraphs nor any BDD nodes have their two children being the same \cite{Bryant1986}. Assuming the
variable ordering is fixed, reduced and ordered BDDs are a canonical representation of a Boolean
formul{\ae} \cite{Bryant1986}.

\begin{figure}[t]
  \centering

  \subfloat[TS]{
    \label{fig:bdd_example:ts}
    \centering

    \begin{tikzpicture}
      \node[] at (-0.8, 4) (init) {};
      \node[shape = circle, draw = black] at (0, 4) (x) {$1$};
      \node[shape = circle, draw = black] at (0, 2) (y) {$2$};

      \path[->] (init) edge (x);
      \path[->] (x) edge[bend left] (y);
      \path[->] (y) edge[bend left] (x);
      \path[->] (y) edge[loop left] (y);
    \end{tikzpicture}

    \vspace{54pt}
  }
  \quad
  \subfloat[BDD]{
    \label{fig:bdd_example:rel}
    \centering

    \begin{tikzpicture}
      \node[shape = circle,    draw = black] at (0, 4) (xu) {$x_1$};

      \node[shape = circle,    draw = black, inner sep=1.8pt] at (-1.0, 3) (xp_1) {$x_1'$};
      \node[shape = circle,    draw = black, inner sep=1.8pt] at ( 1.0, 3) (xp_2) {$x_1'$};

      \node[shape = circle,    draw = black] at (-0.7, 2) (yu_1) {$x_2$};
      \node[shape = circle,    draw = black] at ( 0.7, 2) (yu_2) {$x_2$};

      \node[shape = circle,    draw = black, inner sep=1.8pt] at ( 0.0, 1) (yp) {$x_2'$};

      \node[shape = rectangle, draw = black] at ( 0.0, 0) (T) {$\top$};

      \draw[->, dashed]
        (xu) edge (xp_1)
        (xp_2) edge (yu_2)
        (yu_2) edge (yp)
      ;
      \draw[->]
        (xu) edge (xp_2)
        (xp_1) edge (yu_1)
        (xp_2) edge (yu_1)
        (yu_1) edge (yp)
        (yp) edge (T)
      ;
    \end{tikzpicture}
  }
  \quad
  \subfloat[BDD]{
    \label{fig:bdd_example:init}
    \centering

    \quad
    \begin{tikzpicture}
      \node[shape = circle,    draw = black] at (0.5, 4) (x) {$x_1$};
      \node[shape = circle,    draw = black] at (0.5, 2) (y) {$x_2$};

      \node[shape = rectangle, draw = black] at ( 0.5, 0) (T) {$\top$};

      \draw[->, dashed]
        (y) edge (T)
      ;
      \draw[->]
        (x) edge (y)
      ;
    \end{tikzpicture}
    \quad
  }
  \quad
  \subfloat[BDD]{
    \label{fig:bdd_example:next}
    \centering

    \quad
    \begin{tikzpicture}
      \node[shape = circle,    draw = black] at (0.5, 4) (x) {$x_1$};
      \node[shape = circle,    draw = black] at (0.5, 2) (y) {$x_2$};

      \node[shape = rectangle, draw = black] at ( 0.5, 0) (T) {$\top$};

      \draw[->, dashed]
        (x) edge (y)
      ;
      \draw[->]
        (y) edge (T)
      ;
    \end{tikzpicture}
    \quad
  }
  \quad
  \subfloat[BDD]{
    \label{fig:bdd_example:closure}
    \centering

    \quad
    \begin{tikzpicture}
      \node[shape = circle,    draw = black] at (0.5, 4) (x) {$x_1$};
      \node[shape = circle,    draw = black] at (0.5, 2) (y) {$x_2$};

      \node[shape = rectangle, draw = black] at ( 0.5, 0) (T) {$\top$};

      \draw[->]
        (x) edge (y)
        (y) edge (T)
      ;
    \end{tikzpicture}
    \quad
  }

  \caption{The transition system in \ref{fig:bdd_example:ts} can be represented as the BDD for its
    relation relation \ref{fig:bdd_example:rel} and the BDD for its initial state in
    \ref{fig:bdd_example:init} via a
    \emph{unary} encoding and an \emph{interleaved} variable ordering. One (More) relational product
    of these two BDDs creates the one in \ref{fig:bdd_example:next} (\ref{fig:bdd_example:closure}).\\
    For readability, we suppress the $\bot$ terminal. The $\top$ terminal is drawn as a box while
    BDD nodes are drawn as circles surrounding their decision variable. The \emph{then}, resp.\
    \emph{else}, child is drawn solid, resp.\ dashed.}
  \label{fig:bdd_example}
\end{figure}

\subsubsection{Relational Product}

In the case of symbolic model checking, one or more BDDs, $R_{\vec{x}, \vec{x}'}$, represent
relations between unprimed Boolean variables, $\vec{x}$, that encode the \emph{current} and primed
variables, $\vec{x}'$, that encode the \emph{next} state. These relations are applied to sets of
states, $S_{\vec{x}}$, which is identified using only the unprimed variables. For example, the
transition system in \cref{fig:bdd_example:ts}, can be represented via a unary encoding with the two
variables $x_1$ and $x_2$ which represent the states $1$ and $2$, respectively. The transitions
would then be the relation in \cref{fig:bdd_example:rel}. In particular, \cref{fig:bdd_example:rel}
represents each of the three transitions in \cref{fig:bdd_example:ts} as the disjunction of the
following three formulas.
\begin{align*}
  x_1 \land \neg x_1' \land \neg x_2 \land x_2'
  &&
  \neg x_1 \land x_1' \land x_2 \land \neg x_2'
  &&
  \neg x_1 \land \neg x_1' \land x_2 \land x_2'
\end{align*}
As \cref{fig:bdd_example:rel} is a disjunction of all three transitions, it is a
joint~\cite{Burch1994} relation\footnote{This entire example only pertains to a unary encoding of a
  transition system with asynchronous semantics. If the transition system is synchronous, then the
  conjunction would be used instead and one has to join all transitions together into a single
  relation.}. Instead, one could also keep \cref{fig:bdd_example:rel} as three separate BDDs for a
disjoint~\cite{Burch1994} relation. This has the benefit of representing the transition system
symbolically via smaller BDDs at the cost of having to apply the transitions one-by-one
\cite{Burch1994}. The initial state of the transition system would be the BDD in
\cref{fig:bdd_example:init}, i.e.\ the formula $x_1 \land \neg x_2$.

These BDDs are manipulated using the following two operations (the \emph{relational product}) to
obtain the next or the previous set of states.
\begin{align}
  \label{eq:relnext}
  \mathit{Next}(S_{\vec{x}}, R_{\vec{x}, \vec{x}'})
  &\triangleq (\exists \vec{x} : S_{\vec{x}} \land R_{\vec{x}, \vec{x}'})[\vec{x}' / \vec{x}]
  \\
  \label{eq:relprev}
  \mathit{Prev}(S_{\vec{x}}, R_{\vec{x}, \vec{x}'})
  &\triangleq \exists \vec{x}' : S_{\vec{x}}[\vec{x} / \vec{x}'] \land R_{\vec{x}, \vec{x}'}
\end{align}
For example, $\mathit{Next}$ of \cref{fig:bdd_example:init} and \cref{fig:bdd_example:rel} is the
BDD shown in \cref{fig:bdd_example:next}. The transitive closure of applying $\mathit{Next}$, i.e.\
the result of $\mathit{Next}^*$ of \cref{fig:bdd_example:init} and \cref{fig:bdd_example:rel}, is
shown in \cref{fig:bdd_example:closure}.

\begin{figure}[t]
  \centering

  \subfloat[The \texttt{And} operation.] {
    \label{fig:tandem:and}

    \begin{tikzpicture}[every text node part/.style={align=center}]

      \draw (0,0) rectangle ++(2,1)
      node[pos=.5]{\texttt{Apply} ($\land$)};
      \draw (3.5,0) rectangle ++(2,1)
      node[pos=.5]{\texttt{Reduce}};

      \draw[->] (-0.5,0.8) -- ++(0.5,0)
      node[pos=-0.6]{$f$};
      \draw[->] (-0.5,0.2) -- ++(0.5,0)
      node[pos=-0.6]{$g$};

      \draw[->,dashed] (2,0.5) -- ++(1.5,0)
      node[pos=0.5,above]{\small $f \land g$};

      \draw[->] (5.5,0.5) -- ++(0.5,0)
      node[pos=2.3]{$f \land g$};
    \end{tikzpicture}
  }

  \bigskip

  \subfloat[The \texttt{Exists} operation.] {
    \label{fig:tandem:exists}

    \begin{tikzpicture}[every text node part/.style={align=center}]
      \draw (0,0) rectangle ++(2,1)
      node[pos=.5]{\texttt{Transpose}};
      \draw (3,0) rectangle ++(2,1)
      node[pos=.5]{\texttt{Reduce}};

      \draw[->] (-0.5,0.5) -- ++(0.5,0)
      node[pos=-0.55]{$f$};

      \draw[->, dashed] (2,0.5) -- ++(1,0);

      \draw[->] (5,0.5) -- ++(3,0)
      node[pos=1.23]{$\exists \vec{x} : f$};

      \draw (6,-2) rectangle ++(2,1)
      node[pos=.5]{\texttt{Apply} ($\lor$)};
      \draw (3,-2) rectangle ++(2,1)
      node[pos=.5]{\texttt{Reduce}};

      \draw[->] (7,0.5) -- ++(0,-1.5);

      \draw[->, dashed] (6,-1.5) -- ++(-1,0);
      \draw[->] (4,-1) -- ++(0,1);
    \end{tikzpicture}
  }

  \caption{The \texttt{Apply}--\texttt{Reduce} pipelines in Adiar.}
  \label{fig:tandem}
\end{figure}

\subsection{I/O-efficient BDD Manipulation}

The Adiar~\cite{Soelvsten2022:TACAS} BDD package (based on \cite{Arge1995:1}) aims to compute
efficiently on BDDs that are so large they have to be stored on the disk. To do so, rather than
using depth-first recursion, it processes the BDDs level by level with time-forward
processing~\cite{Chiang1995,Arge1995:2}. This makes it optimal in the I/O-model \cite{Arge1995:1}.
As shown in \cref{fig:tandem:and}, the basic BDD operations, such as the \texttt{And} ($\land$), are
computed in two phases. First, the resulting, but not necessarily reduced, BDD is computed in a
top-down \texttt{Apply} sweep \cite{Soelvsten2022:TACAS}. Secondly, this BDD is made canonical in a
bottom-up \texttt{Reduce} sweep \cite{Soelvsten2022:TACAS}. As shown in \cref{fig:tandem:exists},
more complex BDD operations, such as \texttt{Exists} ($\exists$), are computed by accumulating the
result of multiple nested \texttt{Apply}--\texttt{Reduce} sweeps bottom-up in an outer
\texttt{Reduce} sweep \cite{Soelvsten2024:arXiv}. Here, each set of nested
\texttt{Apply}--\texttt{Reduce} sweeps computes the \texttt{Or} ($\lor$) of a to-be quantified
variable's cofactors \cite{Soelvsten2024:arXiv}.

\section{I/O-efficient Relational Product} \label{sec:relprod}

\subsection{I/O-efficient Variable Substitution} \label{sec:replace}

The work in \cite{Soelvsten2022:TACAS,Soelvsten2024:arXiv} covers all BDD operations in
\cref{eq:relnext,eq:relprev} but the variable substitution ($[\vec{x}' / \vec{x}]$) between primed
and unprimed variables. Hence, this operation's design is the last step towards an I/O-efficient
relational product.

In general, the substitution algorithm is equivalent to changing the variable ordering. Yet, this
operation is notorious for being hard to compute since it greatly affects the structure and the size
of the BDD. But, in the context of symbolic model checking, it merely suffices to consider variable
substitutions, $\pi$, that are \emph{monotone} with respect to the variable ordering. That is, if a
variable $x_i$ is prior to $x_j$ in the given BDD then $\pi(x_i)$ is also prior to $\pi(x_j)$ in the
substituted BDD. For example, the BDDs in \cref{fig:bdd_example} use a variable ordering where
primed and unprimed variables are interleaved and so the variable substitutions in
\cref{eq:relnext,eq:relprev} are monotone. This monotonicity is useful, since the BDDs before and
after such a substitution are isomorphic.

\begin{proposition}
  \label{thm:naive}

  A monotone variable substitution can be applied to a BDD of $N$ nodes in $\Oh{N}$ time and using
  $2 \cdot \tfrac{N}{B}$ I/Os.
\end{proposition}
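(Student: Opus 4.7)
The plan is to exploit the fact that a monotone substitution $\pi$ preserves the variable ordering of the given BDD, so the resulting BDD is isomorphic to the input: the only thing that changes is the label $x_i \mapsto \pi(x_i)$ stored inside each node. Hence the algorithm is conceptually a single pass that rewrites variable labels, and the work is to argue that (a) this pass is correct, i.e.\ produces a reduced and ordered BDD representing the substituted function, and (b) it fits within the claimed I/O and time bounds.

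For correctness, I would first note that since $\pi$ is monotone and the input is ordered with respect to $<$, the output is ordered with respect to the image ordering, so orderedness is preserved. For reducedness, observe that two nodes with the same decision variable and the same children in the input still have the same (relabelled) variable and the same children after relabelling; conversely, two distinct sibling children remain distinct since we do not collapse any nodes. So the pass preserves both absence of duplicates and absence of trivial nodes, and the resulting DAG still represents the semantically correct substituted formula by a straightforward induction on the BDD.

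For the I/O and time bounds, I would rely on Adiar's level-wise layout of nodes on disk: nodes are stored sorted by level, and arcs are realised through UIDs that encode the level together with a within-level index. Because $\pi$ is monotone, the relative order of levels (and hence of nodes) is unchanged, so we can produce the output by a single sequential scan over the input node stream: for each node $\triple{x_i}{v_\top}{v_\bot}$, emit $\triple{\pi(x_i)}{v_\top'}{v_\bot'}$, where the child UIDs are updated by the same deterministic relabelling (this can be done in $\Oh{1}$ per node since monotonicity means the new level of any child is determined only by its old level). This is one sequential read and one sequential write of $N$ nodes, giving $N/B + N/B = 2 \cdot N/B$ I/Os and $\Oh{N}$ internal work.

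The main obstacle I anticipate is the bookkeeping around UIDs: if a UID encodes the level of the pointed-to node, then updating labels also forces updating the level-fields inside the child UIDs, and one must argue that this can be done on the fly during the scan without an additional sort. This is where monotonicity earns its keep — because $\pi$ is an order-preserving, purely local relabelling of levels, the required UID rewriting is a stateless map applied to each arc, so no re-sorting and no auxiliary data structure beyond $\Oh{1}$ internal memory is needed, keeping the I/O bound at the claimed $2 \cdot \tfrac{N}{B}$.
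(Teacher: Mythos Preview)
Your proposal is correct and follows the same approach as the paper: a single sequential scan that relabels each node's variable via $\pi$, costing $\tfrac{N}{B}$ I/Os to read and $\tfrac{N}{B}$ I/Os to write, with $\Oh{N}$ time. The paper's own proof is terser (the isomorphism and hence correctness is asserted in the surrounding text rather than re-argued), but your additional discussion of orderedness, reducedness, and on-the-fly UID relabelling is sound and consistent with how Adiar represents nodes.
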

\begin{proof}
  Using $\tfrac{N}{B}$ I/Os, one can stream through the $N$ BDD nodes in external memory and apply
  $\pi$ onto all BDD nodes requiring $\Oh{N}$ time. The result is simultaneously streamed back to
  external memory using another $\tfrac{N}{B}$ I/Os. \qed
\end{proof}

This is optimal if the input is reduced and the output needs to exist separately in external memory.
Yet, if the input still needs to be reduced, then substitution can be integrated into the
\texttt{Reduce} algorithm from \cite{Soelvsten2022:TACAS} as follows: when reducing the level for
variable $x_i$, output its new nodes with variable $\pi(x_i)$.
\begin{proposition}
  \label{thm:reduce}
  A monotone variable substitution can be applied during the \texttt{Reduce} sweep to an unreduced
  BDD with $n$ levels in $\Oh{n}$ time, using no additional space in internal memory, and not using
  any additional I/Os.
\end{proposition}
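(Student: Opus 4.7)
The plan is to piggy-back the substitution onto the existing \texttt{Reduce} sweep from \cite{Soelvsten2022:TACAS}, rather than performing any separate pass. First, I would recall that \texttt{Reduce} processes the unreduced BDD level by level (bottom-up) and, when emitting the reduced nodes for the level of $x_i$, it is already the place that stamps each new node with a variable label. The single modification is to stamp the node with $\pi(x_i)$ instead of $x_i$. Since the identity of $\pi(x_i)$ depends only on the current level index, computing it requires only $\Oh{1}$ extra work per level, giving $\Oh{n}$ additional time overall and no additional internal-memory state (only the current label is needed, which was already in a register).

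Next, I would argue correctness. Monotonicity of $\pi$ guarantees that the relabeling preserves the variable ordering: if $x_i$ precedes $x_j$ in the input, then $\pi(x_i)$ precedes $\pi(x_j)$ in the output, so the output BDD is still ordered. Since the graph structure is unchanged and the children of each node keep their relative positions, the two reduction rules of \cite{Bryant1986} (eliminate nodes with identical children, merge isomorphic subgraphs) fire on exactly the same pairs of nodes as in the unsubstituted run. Hence the resulting BDD is reduced, ordered, and represents the semantically substituted function.

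Finally, I would note that the I/O accounting is immediate: the substitution touches nodes only at the moment they are already being written by \texttt{Reduce}. No extra read or write of any block is introduced, so the I/O complexity of \texttt{Reduce} is inherited verbatim and no additional I/Os are charged.

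The main obstacle, I expect, is the bookkeeping rather than the algorithmic idea: one must verify that the priority queues and auxiliary sorted streams internal to \texttt{Reduce} key their elements on quantities that are invariant under relabeling (e.g.\ parent arcs indexed by source node identifiers or by level indices), so that replacing $x_i$ by $\pi(x_i)$ in the label field does not disturb any ordering invariant they rely on. Once this is confirmed, the three claims of the proposition follow directly from the corresponding claims about the unmodified \texttt{Reduce}. \qed
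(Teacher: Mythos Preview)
Your proposal is correct and matches the paper's approach: the paper's justification is simply that ``when reducing the level for variable $x_i$, output its new nodes with variable $\pi(x_i)$,'' which is exactly the piggy-backing you describe. Your write-up is more thorough than the paper's (you spell out correctness via monotonicity and flag the bookkeeping concern about priority-queue keys), but the underlying idea is identical.
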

Particular to \cref{eq:relnext}, variable substitution can be integrated into the quantification
operation's outer \texttt{Reduce} sweep which preceedes it.

Similarly, substitution in \cref{eq:relprev} can become part of the conjunction operation that
succeeds it. Here, each BDD node of $S_{\vec{x}}$ would be mapped on-the-fly as they are streamed
from external memory. The implementation of such an idea can be greatly simplified with one
additional restriction on $\pi$. Note, in the context of model checking, variable substitutions are
not only monotone but also \emph{affine}, i.e.\ $\pi(x_i) = x_{\alpha \cdot i + \beta}$ for some
$\alpha, \beta \in \N$ and $\alpha \geq 1$. Hence, by storing $\alpha$ and $\beta$ in internal
memory, one can defer applying $\pi$ until they are read as part of the succeeding BDD operation.
This makes variable substitution a constant-time operation by adding an overhead to all other BDD
operations.
\begin{proposition}
  \label{thm:affine} \label{thm:shift}

  A monotone and affine variable substitution can be applied to a BDD in $\Oh{1}$ time, using
  $\Oh{1}$ additional space in internal memory, and using no additional I/Os.
\end{proposition}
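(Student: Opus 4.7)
The plan is to make the substitution entirely lazy. Rather than touching any node in external memory, I would represent $\pi$ symbolically by the pair $(\alpha, \beta) \in \N^2$ and attach it to the BDD handle in internal memory. This costs $\Oh{1}$ time and $\Oh{1}$ space, and since no node is read or written, no I/O is incurred. Because $\pi$ is monotone (guaranteed by $\alpha \geq 1$), the substituted BDD is isomorphic to the input as a labelled graph: the relative order of variables is preserved, and no nodes need to be reordered, merged, or split. Hence the underlying graph on disk does not need to change at all.

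The remaining work is shifted onto the succeeding BDD operation. I would wrap the reading interface so that whenever a node with variable $x_i$ is streamed from external memory, the variable $x_{\alpha \cdot i + \beta}$ is returned in its place. Evaluating this affine map costs $\Oh{1}$ per node, needs only the two integers already resident in internal memory, and consumes no extra I/Os since the node is being fetched anyway. Uniqueness of children (kept as pointers to existing nodes) is preserved because the renumbering is a bijection on variable indices.

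The main obstacle I anticipate is architectural rather than complexity-theoretic. First, several affine substitutions may appear back-to-back, so the pairs must be composed correctly: applying $(\alpha_1, \beta_1)$ and then $(\alpha_2, \beta_2)$ yields $(\alpha_1 \alpha_2,\, \alpha_2 \beta_1 + \beta_2)$, which remains affine and monotone. Second, every stage of Adiar's pipeline -- \texttt{Apply}, \texttt{Reduce}, and the nested-sweep machinery of \cref{fig:tandem} -- must honour the pending relabeling when it streams in its operands, and the deferred $(\alpha, \beta)$ must either be transparently propagated onto the output handle or flushed via \cref{thm:naive} when a concrete reduced BDD is demanded by the user. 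Once this bookkeeping is in place, the bounds claimed in the proposition follow directly from the constant-time cost of evaluating $i \mapsto \alpha i + \beta$. \qed
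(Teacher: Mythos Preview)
Your proposal is correct and matches the paper's approach: store the affine parameters $(\alpha,\beta)$ in internal memory and defer the relabeling until nodes are streamed by a succeeding operation, so substitution itself costs $\Oh{1}$ time, $\Oh{1}$ space, and no I/Os while pushing a constant per-node overhead onto later sweeps. You go further than the paper by spelling out composition of consecutive affine maps and the bookkeeping needed across the \texttt{Apply}/\texttt{Reduce} pipeline, but the core idea is identical.
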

In practice, $\alpha$ is always $1$. Hence, only $\beta$ needs to be stored.

\subsection{An I/O-efficient \texttt{AndExists}} \label{sec:and-exists}

Just as \cref{thm:reduce,thm:affine} move the variable substitution inside another operation, the
relational product's performance can be further improved by merging the conjunction and existential
quantification into a single \texttt{AndExists} \cite{Dijk2015}.

The quantification operation's outer \texttt{Reduce} sweep requires the input to be transposed
\cite{Soelvsten2024:arXiv}. Hence, \cref{fig:tandem:exists} includes an initial transposition step.
The \texttt{Apply} step of the \texttt{And} in \cref{fig:tandem:and} produces an unreduced output
which already is transposed \cite{Soelvsten2022:TACAS}. This means that naively combining
\cref{fig:tandem:and,fig:tandem:exists} into an \texttt{AndExists} will result in some computational
steps having no effect: the reduced BDD from the \texttt{And} is immediately transposed and reduced
once more as part of the \texttt{Exists}. Hence, the \texttt{Reduce} step of the \texttt{And} and
the \texttt{Transpose} step of the \texttt{Exists} can be skipped.

\begin{optimisation}
  \label{opt:and-transpose}

  Use the unreduced result of the conjunction operation as the input to the quantification
  operation's outer \texttt{Reduce} sweep.
\end{optimisation}
This makes the quantification operation's outer \texttt{Reduce} sweep also do double duty as the
conjunction operation's \texttt{Reduce} sweep. This saves the linearithmic time and I/Os otherwise
needed to first reduce the result of the conjunction and to then transpose it.

Furthermore, experiments in \cite{Soelvsten2024:arXiv} indicate, the quantification algorithm can
become up to $\sim 21\%$ faster by pruning a BDD node (and possibly its subtrees) where
quantification is trivial because one of or both its children are terminals.

\begin{optimisation}
  \label{opt:prune}

  During the conjunction's \texttt{Apply} sweep, prune the resulting BDD nodes that have to-be
  quantified variables.
\end{optimisation}

\section{Experimental Evaluation} \label{sec:experiments}

We have added the \texttt{bdd\_replace} function to Adiar to provide support for variable
substitution. For now, it only supports monotone substitutions with the three propositions presented
in \cref{sec:replace}. Building on this, we added the \texttt{bdd\_relprod}, \texttt{bdd\_relnext},
and \texttt{bdd\_relprev} operations for model checking applications. This includes the ideas in
\cref{sec:and-exists}. Optimisation~\ref{opt:and-transpose} was added in its general form to both
the \texttt{bdd\_exists} and the \texttt{bdd\_forall} functions: transposition is skipped if the
input BDD is unreduced.

With this in hand, we have conducted multiple experiments to evaluate the impact and utility of this
work. In particular, we have sought to answer the following research questions:

\begin{enumerate}
\item\label{rq:optimisations}

  What is the impact of the proposed optimisations in \cref{sec:relprod}?

\item\label{rq:competitors:depth-first}

  How does Adiar's Relational Product operation compare to conventional depth-first implementations?

\item\label{rq:competitors:breadth-first}

  How does Adiar's Relational Product operation compare to the breadth-first algorithms of CAL?
\end{enumerate}

\subsection{Benchmarks}

We have extended our BDD benchmarking
suite\footnote{\href{https://github.com/SSoelvsten/bdd-benchmark}{github.com/SSoelvsten/bdd-benchmark}}
\cite{Soelvsten2022:TACAS} with the foundations for a symbolic model checker for Petri
Nets~\cite{Petri1967} and Asynchronous Boolean Networks~\cite{Kauffman1969,Thomas1991}. This
benchmark explores the given model symbolically as follows:

\begin{itemize}
\item\emph{Reachability}:

  The set of reachable states, $S_{\text{reach}}$, is computed via the transitive closure
  $\mathit{Next}^*(S_I, R)$ on the initial state, $S_I$, and transition relation $R$.
  This uses \texttt{bdd\_relnext} up to a polynomial number of times with respect to the model and
  its state space.

  \smallskip

\item\emph{Deadlock}:

  The set of deadlocked states are identified via
  $S_{\text{reach}} \setminus \mathit{Prev}(S_{\text{reach}}, R)$.
  This requires a single use of \texttt{bdd\_relprev} if a joint partitioning~\cite{Burch1994} is
  used. If a disjoint partitioning~\cite{Burch1994} is used then this operation is called once for
  each transition in the model.
\end{itemize}

\noindent Based on \cite{Meijer2015}, the variable ordering is predetermined by analysing the given
model\footnote{Even though \cite{Meijer2015} suggests one runs the algorithm on a bipartite
  read/write graph derived from the model, we instead run it on an incidence graph derived from the
  model. Our preliminary experiments indicate this further decreases the size of the BDDs.} with
Sloan's algorithm~\cite{Sloan1989}. For each model, we have run them with both a joint and a
disjoint~\cite{Burch1994} partitioning of the transition relation.

Based on preliminary experiments, we identified 75 model instances that were solvable with Adiar. In
particular, these are 16 Petri nets from the 2021--2023 Model Checking
Competitions~\cite{Kordon2021,Kordon2022,Kordon2023}, 41 Boolean networks distributed with
AEON~\cite{Benes2020}, and 18 Boolean networks distributed with PyBoolNet~\cite{Klarner2016}.

For all 150 of these instances, none of the BDDs generated during either benchmark grew larger than
$2 \cdot 10^6$ BDD nodes ($48$~MiB). Furthermore, the BDDs encoding the respective initial states
required only 294 or fewer BDD nodes ($6.9$~KiB). While computing the transition relation's
transitive closure, the BDDs grew slowly. That is, most, if not all, BDD computations in either
benchmark are too small to be within the current scope of Adiar \cite{Soelvsten2023:ATVA}. Hence,
inspired by the recent work of Pastva and Henzinger \cite{Pastva2023}, we have also created the
following two additional benchmarks.


\begin{itemize}
\item\emph{Next}:

  Given a BDD with a set of states, $S_{\vec{x}}$, and another one with a relation,
  $R_{\vec{x}, \vec{x}'}$, the next set of states,
  $\mathit{Next}(S_{\vec{x}}, R_{\vec{x}, \vec{x}})$, is computed with a single
  \texttt{bdd\_relnext}.

  \smallskip

\item\emph{Prev}:

  Given a BDD with a set of states, $S_{\vec{x}}$, and another one with a relation,
  $R_{\vec{x}, \vec{x}'}$, the previous set of states,
  $\mathit{Prev}(S_{\vec{x}}, R_{\vec{x}, \vec{x}})$, is computed with a
  \texttt{bdd\_relprev}.
\end{itemize}

For inputs, we have followed the approach in \cite{Pastva2023}. The above-described reachability
analysis has been extended to save the (joint) transition relation BDD, $R_{\vec{x}, \vec{x}'}$,
together with the first state BDD, $S_{\vec{x}}$, constructed of each order of magnitude. Using
this, we have generated BDDs by running reachability analysis on all models from the 2020--2023
Model Checking Competitions~\cite{Kordon2020,Kordon2021,Kordon2022,Kordon2023}. This was done using
LibBDD~\cite{Benes2020} as the BDD backend and a time limit of 1~h on a Ubuntu 24.4 machine with a
12-core $3.6$~GhZ Intel i7-12700 processor and 64~GiB of memory. This has resulted in serialised
BDDs from 124 model instances. These BDDs have been made publically available at the following DOI:

\begin{center}
  \doi{10.5281/zenodo.13928216}
\end{center}

For our evaluation, we focus on the three models \texttt{GPUForwardProgress} 20a (2021),
\texttt{SmartHome} 16 (2020), and \texttt{ShieldPPPs} 10a (2020) where we could generate a set of
states with a magnitude of $2^{25}$ BDD nodes\footnote{A fourth model, \texttt{SmartHome} 17 (2020),
  also created a set of states this large. Yet, the serialized BDD for the relation turned out to be
  corrupted. The published set of BDDs above has fixed this and other data corruptions. Furthermore,
  the repository also includes large BDDs that required more than 1~h to be generated.}. The
relation sizes are 500~MiB, 270~MiB, and 0.1~MiB, respectively. For state size, we focus on the four
largest orders of magnitude constructed, i.e.\ from $2^{22}$ (40~MiB) to $2^{25}$ BDD nodes
(320~MiB).

\subsection{Hardware, Settings, and Measurements}

Similar to
\cite{Soelvsten2022:TACAS,Soelvsten2023:NFM,Soelvsten2023:ATVA,Soelvsten2024:SPIN,Soelvsten2024:arXiv},
we ran our experiments on machines at the Centre for Scientific Computing in Aarhus. These machines
run Rocky Linux (Kernel 4.18.0-513) with 48-core $3.0$ GHz Intel Xeon Gold 6248R processors,
$384$~GiB of RAM, $3.5$~TiB of SSD disk (with $48$~GiB of swap memory). The benchmark and BDD
packages were compiled with GCC $10.1.0$ and Rust $1.72.1$. Due to an update to the cluster's
machines, LibBDD~\cite{Benes2020} was compiled with GCC $13.2.0$ and Rust $1.77.1$ for the
\emph{Next} and \emph{Prev} benchmarks.

Each BDD package was given $9/10$th of these $384$~GiB of internal memory\footnote{CUDD and LibBDD
  ignore any given memory limit and hence may have used more.}; this leaves the remaining space for
the OS and the benchmark's other data structures. Otherwise, each BDD package was given a single CPU
core and initialised with their respective default and/or recommended settings.

For \emph{Reachability} and \emph{Deadlock} (\cref{tab:versus,fig:versus.mcnet}), we have measured
the running time 3 times with a timeout of 48~h. For \emph{Next}, and \emph{Prev}
(\cref{tab:versus,fig:versus.relprod}) we measured the running time of each instance 5 times and
bounded the running time to 12~h. For RQ~\ref{rq:optimisations}, we tried to minimise the noise due
to hardware and the OS in the reported numbers. Similar to
\cite{Soelvsten2022:TACAS,Soelvsten2023:ATVA,Soelvsten2024:arXiv} (based on \cite{Chen2016}), we
intended to do so by reporting the smallest measured running time. But, some measurements seem to
have been taken while the machines were in a particularly good state. Using the minimum in this case
makes RQ~\ref{rq:optimisations} much harder to investigate. Hence, we instead resort to reporting
the median.

For the data shown in \cref{fig:memory} for RQ~\ref{rq:competitors:depth-first}, we deemed that a
single measurement of each data point would suffice.

\subsection{RQ~\ref{rq:optimisations}: Effect of the Optimisations}

We have implemented the ideas from \cref{sec:replace} and \cref{sec:and-exists} in order of their
complexity and expected benefit in practice: \cref{thm:naive}~(---),
Optimisation~\ref{opt:and-transpose}~(\markSkipTranspose),
Optimisation~\ref{opt:prune}~(\markPruningAnd), \cref{thm:reduce}~(\markExistsReplace), and finally
\cref{thm:affine}~(\markShiftReplace). For evaluation, we have measured the running time of these
five accumulated set of features. \Cref{fig:versions.mcnet,fig:versions.relprod} show the relative
performance increase compared to only using \cref{thm:naive}.

\begin{table}[b]
  \centering

  \caption{Total running time (seconds) of each version of Adiar. The \# column indicates the number
    of instances that were solved by all five versions.}
  \label{tab:versions}

  \bgroup
  \def\arraystretch{1.1}
  \setlength\tabcolsep{5pt}

  \begin{tabular}{l||c||rrrrr}
                          & \#  & {---$^{\text{Prop.~\ref{thm:naive}}}$}
                                             & {\markSkipTranspose$^{\text{Prop.~\ref{thm:naive}}}_{\text{Opt.~\ref{opt:and-transpose}}}$}
                                                          & {\markPruningAnd$^{\text{Prop.~\ref{thm:naive}}}_{\text{Opt.~\ref{opt:and-transpose}+\ref{opt:prune}}}$}
                                                                       & {\markExistsReplace$^{\text{Prop.~\ref{thm:naive}+\ref{thm:reduce}}}_{\text{Opt.~\ref{opt:and-transpose}+\ref{opt:prune}}}$}
                                                                                    & {\markShiftReplace$^{\text{Prop.~\ref{thm:naive}+\ref{thm:reduce}+\ref{thm:affine}}}_{\text{Opt.~\ref{opt:and-transpose}+\ref{opt:prune}}}$}
    \\ \hline \hline
    \emph{Reachability}   & 147 & 4134.2     & 3918.1     & 3738.6     & 3627.5     & 3659.7
    \\
    \emph{Deadlock}       & 147 & 416.4      & 269.3      & 246.5      & 248.1      & 223.8
    \\ \hline
    \emph{Next}           & 12  & 24921.8    & 24378.0    & 23994.6    & 23257.7    & 23628.1
    \\
    \emph{Prev}           & 10  & 77541.9    & 78068.2    & 76862.0    & 75693.8    & 76353.8
  \end{tabular}
  \egroup
\end{table}

\begin{figure}[b!]
  \centering

  \subfloat[\emph{Reachability}] {
    \label{fig:versions.mcnet:reachability}

    \begin{tikzpicture}
      \begin{axis}[%
        width=0.48\linewidth, height=0.4\linewidth,
        every tick label/.append style={font=\scriptsize},
        xlabel={\scriptsize Running Time with Prop.~\ref{thm:naive} (s)},
        xmin=0.05,
        xtick={0.01,0.1,1,10,100,1000,10000},
        xmax=3000,
        xmode=log,
        ylabel={\scriptsize Speed-up},
        ymin=0.8,
        ymax=1.8,
        ytick={0.8, 1, 1.2, 1.4, 1.6, 1.8},
        yticklabels={
          0.8 $\times$,
          1.0 $\times$,
          1.2 $\times$,
          1.4 $\times$,
          1.6 $\times$,
          1.8 $\times$,
        },
        grid style={dashed,black!12},
        ]

        \addplot[domain=0.01:3000, samples=8, color=black]
        {1};

        \begin{scope}[blend mode=soft light]
          \addplot+ [style=dots_adiar-skip_transpose]
          table {./data/reachability.adiar-skip_transpose.tex};

          \addplot+ [style=dots_adiar-pruning_and]
          table {./data/reachability.adiar-pruning_and.tex};

          \addplot+ [style=dots_adiar-exists_replace]
          table {./data/reachability.adiar-exists_replace.tex};

          \addplot+ [style=dots_adiar-shift_replace]
          table {./data/reachability.adiar-shift_replace.tex};

          \addplot[domain=0.05:10000, samples=8, style=dashed, opacity=0.5, color=adiar-skip_transpose]
          {1.1320520920371682};
          \addplot[domain=0.05:10000, samples=8, style=dashed, opacity=0.5, color=adiar-pruning_and]
          {1.1561359418512216};
          \addplot[domain=0.05:10000, samples=8, style=dashed, opacity=0.5, color=adiar-exists_replace]
          {1.1929371164022806};
          \addplot[domain=0.05:10000, samples=8, style=dashed, opacity=0.5, color=red]
          {1.1849073763432338};
        \end{scope}
      \end{axis}
    \end{tikzpicture}
  }
  \subfloat[\emph{Deadlocks}] {
    \label{fig:versions.mcnet:deadlock}

    \begin{tikzpicture}
      \begin{axis}[%
        width=0.48\linewidth, height=0.4\linewidth,
        every tick label/.append style={font=\scriptsize},
        xlabel={\scriptsize Running Time with Prop.~\ref{thm:naive} (s)},
        xmin=0.005,
        xtick={0.001,0.01,0.1,1,10,100,1000,10000},
        xmax=3000,
        xmode=log,
        ylabel={\scriptsize Speed-up},
        ymin=0.5,
        ymax=4.5,
        ytick={1, 2.0, 3.0, 4.0},
        yticklabels={
          1 $\times$,
          2 $\times$,
          3 $\times$,
          4 $\times$,
        },
        grid style={dashed,black!12},
        ]

        \addplot[domain=0.001:3000, samples=8, color=black]
        {1};

        \begin{scope}[blend mode=soft light]
          \addplot+ [style=dots_adiar-skip_transpose]
          table {./data/deadlock.adiar-skip_transpose.tex};

          \addplot+ [style=dots_adiar-pruning_and]
          table {./data/deadlock.adiar-pruning_and.tex};

          \addplot+ [style=dots_adiar-exists_replace]
          table {./data/deadlock.adiar-exists_replace.tex};

          \addplot+ [style=dots_adiar-shift_replace]
          table {./data/deadlock.adiar-shift_replace.tex};

          \addplot[domain=0.005:10000, samples=8, style=dashed, opacity=0.5, color=adiar-skip_transpose]
          {1.084506429063764};
          \addplot[domain=0.005:10000, samples=8, style=dashed, opacity=0.5, color=adiar-pruning_and]
          {1.1304087654038872};
          \addplot[domain=0.005:10000, samples=8, style=dashed, opacity=0.5, color=adiar-exists_replace]
          {1.093934702445302};
          \addplot[domain=0.005:10000, samples=8, style=dashed, opacity=0.5, color=red]
          {1.099224638982916};
        \end{scope}
      \end{axis}
    \end{tikzpicture}
  }

  \smallskip

  \begin{tikzpicture}
    \begin{customlegend}[
      legend columns=-1,
      legend style={draw=none,column sep=1ex},
      legend entries={
        {$^{\text{Prop.~\ref{thm:naive}}}$},
        {$^{\text{Prop.~\ref{thm:naive}}}_{\text{Opt.~\ref{opt:and-transpose}}}$},
        {$^{\text{Prop.~\ref{thm:naive}}}_{\text{Opt.~\ref{opt:and-transpose}+\ref{opt:prune}}}$},
        {$^{\text{Prop.~\ref{thm:naive}+\ref{thm:reduce}}}_{\text{Opt.~\ref{opt:and-transpose}+\ref{opt:prune}}}$},
        {$^{\text{Prop.~\ref{thm:naive}+\ref{thm:reduce}+\ref{thm:affine}}}_{\text{Opt.~\ref{opt:and-transpose}+\ref{opt:prune}}}$},
      }
      ]
      \addlegendimage{style=dots_adiar-naive}
      \addlegendimage{style=dots_adiar-skip_transpose}
      \addlegendimage{style=dots_adiar-pruning_and}
      \addlegendimage{style=dots_adiar-exists_replace}
      \addlegendimage{style=dots_adiar-shift_replace}
    \end{customlegend}
  \end{tikzpicture}

  \caption{Impact of optimisations on model checking tasks running time. Averages are drawn as
    dashed lines.}
  \label{fig:versions.mcnet}
\end{figure}
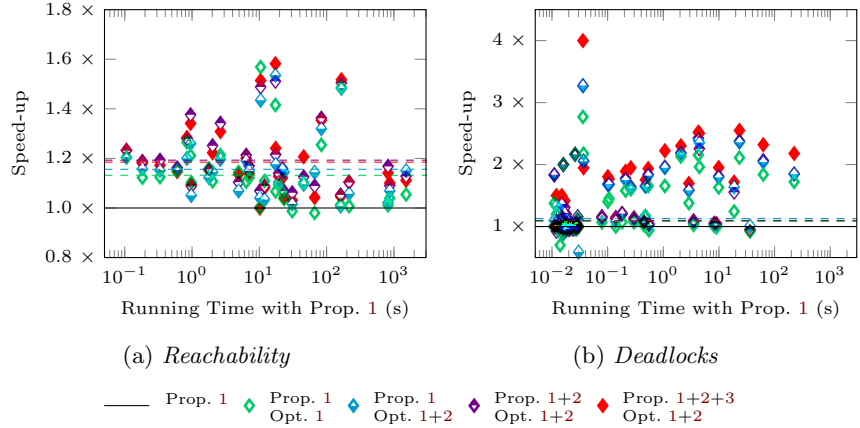

\begin{figure}
  \centering

  \subfloat[\emph{Next}] {
    \label{fig:versions.relprod:next}

    \begin{tikzpicture}
      \begin{axis}[%
        width=0.48\linewidth, height=0.4\linewidth,
        every tick label/.append style={font=\scriptsize},
        xlabel={\scriptsize Running Time with Prop.~\ref{thm:naive} (s)},
        xmin=100,
        xtick={10,100,1000,10000},
        xmax=20000,
        xmode=log,
        ylabel={\scriptsize Speed-up},
        ymin=0.9,
        ymax=1.3,
        ytick={0.9, 1, 1.1, 1.2, 1.3},
        yticklabels={
          0.9 $\times$,
          1.0 $\times$,
          1.1 $\times$,
          1.2 $\times$,
          1.3 $\times$,
        },
        grid style={dashed,black!12},
        ]

        \addplot[domain=0.01:1000000, samples=8, color=black]
        {1};

        \begin{scope}[blend mode=soft light]
          \addplot+ [style=dots_adiar-skip_transpose]
          table {./data/next.adiar-skip_transpose.tex};

          \addplot+ [style=dots_adiar-pruning_and]
          table {./data/next.adiar-pruning_and.tex};

          \addplot+ [style=dots_adiar-exists_replace]
          table {./data/next.adiar-exists_replace.tex};

          \addplot+ [style=dots_adiar-shift_replace]
          table {./data/next.adiar-shift_replace.tex};

          \addplot[domain=100:20000, samples=8, style=dashed, opacity=0.5, color=adiar-skip_transpose]
          {1.037722530705529};
          \addplot[domain=100:20000, samples=8, style=dashed, opacity=0.5, color=adiar-pruning_and]
          {1.058297481489488};
          \addplot[domain=100:20000, samples=8, style=dashed, opacity=0.5, color=adiar-exists_replace]
          {1.0865219867524853};
          \addplot[domain=100:20000, samples=8, style=dashed, opacity=0.5, color=red]
          {1.069799662890797};
        \end{scope}
      \end{axis}
    \end{tikzpicture}
  }
  \subfloat[\emph{Prev}] {
    \label{fig:versions.relprod:prev}

    \begin{tikzpicture}
      \begin{axis}[%
        width=0.48\linewidth, height=0.4\linewidth,
        every tick label/.append style={font=\scriptsize},
        xlabel={\scriptsize Running Time with Prop.~\ref{thm:naive} (s)},
        xmin=100,
        xtick={10,100,1000,10000,100000},
        xmax=100000,
        xmode=log,
        ymin=0.9,
        ymax=1.3,
        ytick={0.9, 1, 1.1, 1.2, 1.3},
        yticklabels={,,,,},
        grid style={dashed,black!12},
        ]

        \addplot[domain=0.01:1000000, samples=8, color=black]
        {1};

        \begin{scope}[blend mode=soft light]
          \addplot+ [style=dots_adiar-skip_transpose]
          table {./data/prev.adiar-skip_transpose.tex};

          \addplot+ [style=dots_adiar-pruning_and]
          table {./data/prev.adiar-pruning_and.tex};

          \addplot+ [style=dots_adiar-exists_replace]
          table {./data/prev.adiar-exists_replace.tex};

          \addplot+ [style=dots_adiar-shift_replace]
          table {./data/prev.adiar-shift_replace.tex};

          \addplot[domain=100:1000000, samples=8, style=dashed, opacity=0.5, color=adiar-skip_transpose]
          {1.010687771310793};
          \addplot[domain=100:1000000, samples=8, style=dashed, opacity=0.5, color=adiar-pruning_and]
          {1.031185743248829};
          \addplot[domain=100:1000000, samples=8, style=dashed, opacity=0.5, color=adiar-exists_replace]
          {1.0452715683052805};
          \addplot[domain=100:1000000, samples=8, style=dashed, opacity=0.5, color=adiar-shift_replace]
          {1.0374409648708278};
        \end{scope}
      \end{axis}
    \end{tikzpicture}
  }

  \smallskip

  \begin{tikzpicture}
    \begin{customlegend}[
      legend columns=-1,
      legend style={draw=none,column sep=1ex},
      legend entries={
        {$^{\text{Prop.~\ref{thm:naive}}}$},
        {$^{\text{Prop.~\ref{thm:naive}}}_{\text{Opt.~\ref{opt:and-transpose}}}$},
        {$^{\text{Prop.~\ref{thm:naive}}}_{\text{Opt.~\ref{opt:and-transpose}+\ref{opt:prune}}}$},
        {$^{\text{Prop.~\ref{thm:naive}+\ref{thm:reduce}}}_{\text{Opt.~\ref{opt:and-transpose}+\ref{opt:prune}}}$},
        {$^{\text{Prop.~\ref{thm:naive}+\ref{thm:reduce}+\ref{thm:affine}}}_{\text{Opt.~\ref{opt:and-transpose}+\ref{opt:prune}}}$},
      }
      ]
      \addlegendimage{style=dots_adiar-naive}
      \addlegendimage{style=dots_adiar-skip_transpose}
      \addlegendimage{style=dots_adiar-pruning_and}
      \addlegendimage{style=dots_adiar-exists_replace}
      \addlegendimage{style=dots_adiar-shift_replace}
    \end{customlegend}
  \end{tikzpicture}

  \caption{Impact of optimisations on running time of a single relational product.
    Averages are drawn as dashed lines.}
  \label{fig:versions.relprod}
\end{figure}
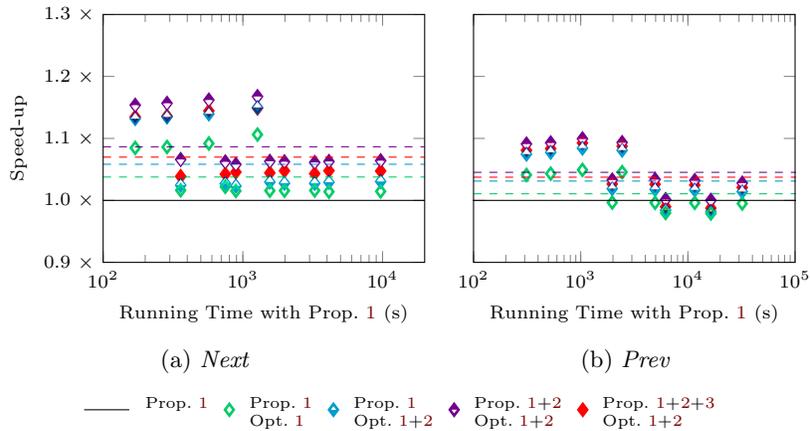

\Cref{tab:versions} shows for each benchmark the total running time of each optimisation. The
running time of the two model checking tasks, \emph{Reachability} and \emph{Deadlock}, are mainly
affected by the optimisations. On the other hand, the running time of \emph{Next} and \emph{Prev}
are comparatively unaffected. This suggest that, as the size of BDDs increases, the computational
cost of variable substitution and the conjunction, which the improvements pertain to, decreases in
comparison to the cost of quantifying variables.

Optimisation~\ref{opt:and-transpose}~(\markSkipTranspose) improves the total running time between
$-0.7\%$ and $35.3\%$ depending on the benchmark. In particular, it mainly improves the running for
the smaller BDDs in \emph{Reachability} and \emph{Deadlock}.

Both the relation, $R_{\vec{x}, \vec{x}'}$, and states, $S_{\vec{x}}$, have many edges to $\bot$
that shortcut the $\land$ operator and gives Optimisation~\ref{opt:prune}~(\markPruningAnd) ample
opportunity to prune subtrees. Hence, it improves the total running time between $4.6\%$ and
$15.7\%$ depending on the benchmark. In particular, for the \emph{Next} benchmark with
\texttt{ShieldPPPs} of a magnitude $2^{24}$, this optimisation decreases the intermediate BDD size
by $30\%$. Yet, the number of nodes processed during the quantification operation's nested inner
\texttt{Apply}--\texttt{Reduce} sweeps is unaffected. Hence, this optimisation may only improve
performance of the conjunction's \texttt{Apply} and \texttt{Reduce} sweeps (the latter of which is
merged with the quantification operation's outer \texttt{Reduce} sweep due to
Optimisation~\ref{opt:and-transpose}~(\markSkipTranspose)) rather than save work during existential
quantification.

\Cref{thm:reduce}~(\markExistsReplace) further improves the total running time for
\emph{Reachability} and \emph{Next} by $3.0\%$ and $3.1\%$, respectively.

Finally, \cref{thm:affine}~(\markShiftReplace) has no effect on the \emph{Reachability} and the
\emph{Next} benchmarks. \cref{tab:versions} shows it adds an overhead of up to $1.6\%$. This
slowdown is due to mapping all BDD nodes on-the-fly with an $\alpha = 1$ and $\beta = 0$, i.e.\
without any actual changes. This overhead can be circumvented by only incorporating this remapping
into the $\land$-operation in \emph{Prev}. Yet, doing so would require a substantially higher
implementation effort and it would also remove the ability to use this optimisation elsewhere. On
the other hand, \cref{tab:versions} also shows \cref{thm:affine}~(\markShiftReplace) improves
\emph{Deadlock} by $9.8\%$ when compared to the version only including up to
\cref{thm:reduce}~(\markExistsReplace). For \emph{Prev}, there is a slowdown of $0.8\%$. Yet, since
\cref{thm:reduce}~(\markExistsReplace) does not apply to this benchmark and the version including
\cref{thm:affine}~(\markShiftReplace) is $0.7\%$ faster than the one only with
Optimisations~\ref{opt:and-transpose} and \ref{opt:prune}~(\markPruningAnd), this can be attributed
to noise and the whims of the compiler.

\subsection{RQ~\ref{rq:competitors:depth-first}: Comparison to Depth-first Implementations}

\subsubsection{Unlimited Memory}

\if\arxiv1%
  \begin{table}[p]
\else
  \begin{table}[t]
\fi
  \centering

  \caption{Total Running time of Adiar
    \if\arxiv1%
    (with Prop.~\ref{thm:naive}, \ref{thm:reduce}, and
    \ref{thm:affine} and Opt.~\ref{opt:and-transpose} and \ref{opt:prune})
    \fi
    and other implementations of BDDs to solve all benchmarks. The \# column indicates the number of
    instances that were solved by all BDD packages.}
  \label{tab:versus}

  \bgroup
  \def\arraystretch{1.1}
  \setlength\tabcolsep{5pt}

  \if\arxiv1%
    \begin{tabular}{l||c||rrrrrr}
  \else
    \begin{tabular}{l||c||rrrrrH}
  \fi
                          & \#  & Adiar      & BuDDy   & CAL      & CUDD     & LibBDD   & Sylvan
    \\ \hline \hline
    \if\arxiv1%
    \emph{Reachability}   & 149 & 3659.7     & 44.2    & 2989.7   & 118.4    & 797.9    & 62.4

    \\
    \emph{Deadlock}       & 149 & 223.8      & 11.3    & 12935.69 & 73.6     & 67.9     & 9.6
    \\ \hline
    \fi
    \emph{Next}           & 12  & 23628.1    & 1827.43 & TO       & 10021.34 & 2958.89  & --
    \\
    \emph{Prev}           & 10  & 76353.8    & 4252.37 & TO       & 6890.05  & 6815.761 & --
  \end{tabular}
  \egroup
\end{table}
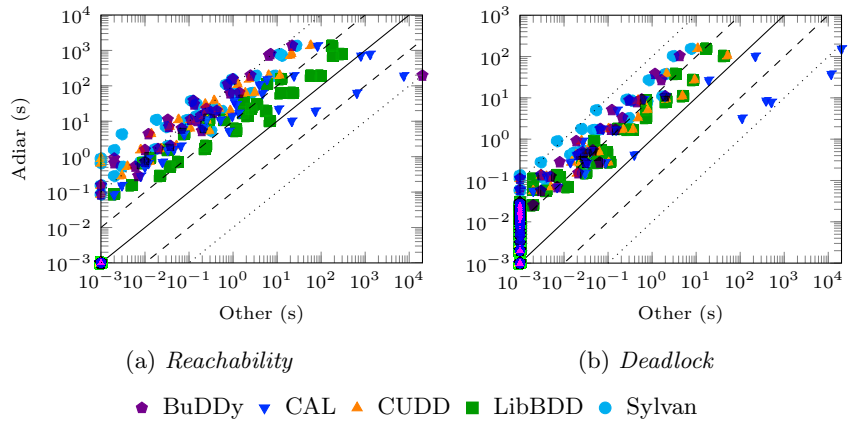
\begin{figure}[p]
  \centering

  \subfloat[\emph{Reachability}] {
    \label{fig:versus.mcnet:reachability}

    \begin{tikzpicture}
      \begin{axis}[%
        width=0.48\linewidth, height=0.4\linewidth,
        every tick label/.append style={font=\scriptsize},
        xlabel={\scriptsize Other (s)},
        xmin=0.001,
        xtick={0.001,0.01,0.1,1,10,100,1000,10000},
        xmax=20000,
        xmode=log,
        ylabel={\scriptsize Adiar (s)},
        ymin=0.001,
        ymax=10000,
        ytick={0.001,0.01,0.1,1,10,100,1000,10000},
        ymode=log,
        grid style={dashed,black!12},
        ]

        \addplot[domain=0.001:20000, samples=8, color=black, dotted]
        {0.01*x};
        \addplot[domain=0.001:20000, samples=8, color=black, dashed]
        {0.1*x};
        \addplot[domain=0.001:10000, samples=8, color=black]
        {x};
        \addplot[domain=0.001:1000, samples=8, color=black, dashed]
        {10*x};
        \addplot[domain=0.001:100, samples=8, color=black, dotted]
        {100*x};

        \begin{scope}[blend mode=soft light]
          \addplot+ [style=dots_buddy]
          table {./data/reachability.buddy.tex};

          \addplot+ [style=dots_cal]
          table {./data/reachability.cal.tex};

          \addplot+ [style=dots_cudd]
          table {./data/reachability.cudd.tex};

          \addplot+ [style=dots_libbdd]
          table {./data/reachability.libbdd.tex};

          \addplot+ [style=dots_sylvan]
          table {./data/reachability.sylvan.tex};
        \end{scope}
      \end{axis}
    \end{tikzpicture}
  }
  \subfloat[\emph{Deadlock}] {
    \label{fig:versus.mcnet:deadlock}

    \begin{tikzpicture}
      \begin{axis}[%
        width=0.48\linewidth, height=0.4\linewidth,
        every tick label/.append style={font=\scriptsize},
        xlabel={\scriptsize Other (s)},
        xmin=0.001,
        xtick={0.001,0.01,0.1,1,10,100,1000,10000,100000},
        xmax=20000,
        xmode=log,
        ymin=0.001,
        ymax=1000,
        ytick={0.001,0.01,0.1,1,10,100,1000},
        ymode=log,
        grid style={dashed,black!12},
        ]

        \addplot[domain=0.001:100000, samples=8, color=black, dotted]
        {0.01*x};
        \addplot[domain=0.001:100000, samples=8, color=black, dashed]
        {0.1*x};
        \addplot[domain=0.001:1000, samples=8, color=black]
        {x};
        \addplot[domain=0.001:100, samples=8, color=black, dashed]
        {10*x};
        \addplot[domain=0.001:10, samples=8, color=black, dotted]
        {100*x};

        \begin{scope}[blend mode=soft light]
          \addplot+ [style=dots_buddy]
          table {./data/deadlock.buddy.tex};

          \addplot+ [style=dots_cal]
          table {./data/deadlock.cal.tex};

          \addplot+ [style=dots_cudd]
          table {./data/deadlock.cudd.tex};

          \addplot+ [style=dots_libbdd]
          table {./data/deadlock.libbdd.tex};

          \addplot+ [style=dots_sylvan]
          table {./data/deadlock.sylvan.tex};
        \end{scope}
      \end{axis}
    \end{tikzpicture}
  }

  \smallskip

  \begin{tikzpicture}
    \begin{customlegend}[
      legend columns=-1,
      legend style={draw=none,column sep=1ex},
      legend entries={
        BuDDy,
        CAL,
        CUDD,
        LibBDD,
        Sylvan
      }
      ]
      \addlegendimage{style=dots_buddy}
      \addlegendimage{style=dots_cal}
      \addlegendimage{style=dots_cudd}
      \addlegendimage{style=dots_libbdd}
      \addlegendimage{style=dots_sylvan}
    \end{customlegend}
  \end{tikzpicture}

  \caption{Running time of Adiar
    on model checking tasks compared to other implementations. Timeouts are shown as markers at
    the top and the right.}
  \label{fig:versus.mcnet}
\end{figure}
\if\arxiv1%
  \begin{figure}[p]
\else
  \begin{figure}
\fi
  \centering

  \subfloat[\emph{Next}] {
    \begin{tikzpicture}
      \begin{axis}[%
        width=0.48\linewidth, height=0.4\linewidth,
        every tick label/.append style={font=\scriptsize},
        xlabel={\scriptsize Other (s)},
        xmin=10,
        xtick={10,100,1000,10000,100000},
        xmax=100000,
        xmode=log,
        ylabel={\scriptsize Adiar (s)},
        ymin=10,
        ymax=100000,
        ytick={1,10,100,1000,10000,100000},
        ymode=log,
        grid style={dashed,black!12},
        ]

        \addplot[domain=10:100000, samples=8, color=black, dotted]
        {0.01*x};
        \addplot[domain=10:100000, samples=8, color=black, dashed]
        {0.1*x};
        \addplot[domain=10:100000, samples=8, color=black]
        {x};
        \addplot[domain=10:10000, samples=8, color=black, dashed]
        {10*x};
        \addplot[domain=10:1000, samples=8, color=black, dotted]
        {100*x};

        \begin{scope}[blend mode=soft light]
          \addplot+ [style=dots_buddy]
          table {./data/next.buddy.tex};

          \addplot+ [style=dots_cal]
          table {./data/next.cal.tex};

          \addplot+ [style=dots_cudd]
          table {./data/next.cudd.tex};

          \addplot+ [style=dots_libbdd]
          table {./data/next.libbdd.tex};
        \end{scope}
      \end{axis}
    \end{tikzpicture}
  }
  \subfloat[\emph{Prev}] {
    \begin{tikzpicture}
      \begin{axis}[%
        width=0.48\linewidth, height=0.4\linewidth,
        every tick label/.append style={font=\scriptsize},
        xlabel={\scriptsize Other (s)},
        xmin=10,
        xtick={1,10,100,1000,10000,100000},
        xmax=100000,
        xmode=log,
        ymin=10,
        ymax=100000,
        ytick={1,10,100,1000,10000,100000},
        yticklabels={,,,,,},
        ymode=log,
        grid style={dashed,black!12},
        ]

        \addplot[domain=10:100000, samples=8, color=black, dotted]
        {0.01*x};
        \addplot[domain=10:100000, samples=8, color=black, dashed]
        {0.1*x};
        \addplot[domain=10:100000, samples=8, color=black]
        {x};
        \addplot[domain=10:10000, samples=8, color=black, dashed]
        {10*x};
        \addplot[domain=10:1000, samples=8, color=black, dotted]
        {100*x};

        \begin{scope}[blend mode=soft light]
          \addplot+ [style=dots_buddy]
          table {./data/prev.buddy.tex};

          \addplot+ [style=dots_cal]
          table {./data/prev.cal.tex};

          \addplot+ [style=dots_cudd]
          table {./data/prev.cudd.tex};

          \addplot+ [style=dots_libbdd]
          table {./data/prev.libbdd.tex};
        \end{scope}
      \end{axis}
    \end{tikzpicture}
  }

  \smallskip

  \begin{tikzpicture}
    \begin{customlegend}[
      legend columns=-1,
      legend style={draw=none,column sep=1ex},
      legend entries={
        BuDDy,
        CAL,
        CUDD,
        LibBDD
      }
      ]
      \addlegendimage{style=dots_buddy}
      \addlegendimage{style=dots_cal}
      \addlegendimage{style=dots_cudd}
      \addlegendimage{style=dots_libbdd}
    \end{customlegend}
  \end{tikzpicture}

  \caption{Running time of Adiar
    on a single relational product compared to other implementations. Timeouts are shown as markers
    at the top and the right.}
  \label{fig:versus.relprod}
\end{figure}
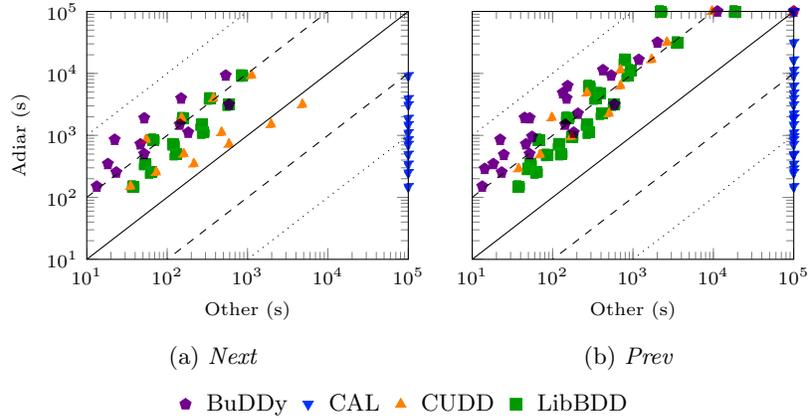

We have measured the running time for the same benchmarks with the depth-first BDD packages
BuDDy~2.4~\cite{Lind1999}, CUDD~3.0~\cite{Somenzi2015}, LibBDD~0.5.22~\cite{Benes2020}, and
Sylvan~1.8.1~\cite{Dijk2016}. Sylvan is not included for the \emph{Next} and \emph{Prev}
measurements since the manual BDD reconstruction results in a stack overflow\footnote{More
  precisely, its garbage collection algorithm starts out by creating a task for each BDD root. This
  is to mark all nodes that are still alive. Yet, if the input BDD is big enough then the number of
  BDD nodes at a single level may outgrow the worker queues in Lace~\cite{Dijk2014}.}.
\Cref{fig:versus.mcnet,fig:versus.relprod} show scatter plots of the running time of Adiar (with all
features from \cref{sec:relprod}) in comparison to the other implementations.

That the BDD size in \emph{Reachability} and \emph{Deadlock} are small is clearly evident in
\cref{tab:versus,fig:versus.mcnet}. Earlier, the disk-based algorithms of Adiar have proven to be
several orders of magnitude slower than the conventional depth-first algorithms when computing on
small BDDs \cite{Soelvsten2022:TACAS,Soelvsten2023:ATVA,Soelvsten2024:arXiv}. This is also evident
in \cref{fig:versus.mcnet} where the gap between Adiar and depth-first implementations becomes
smaller as the computation time, and with it the BDDs, grow larger. Furthermore, the gap in
\cref{fig:versus.mcnet:deadlock} is smaller than in \cref{fig:versus.mcnet:reachability} since
\emph{Deadlock} only includes computation on the larger BDD that represents the entire state space.

As the BDD sizes are larger in \emph{Next} and \emph{Prev}, the gap between Adiar and depth-first
implementations is expected to be smaller. Indeed, as \cref{tab:versus} and
\cref{fig:versus.relprod} show, the gap is only of a single order of magnitude or less.

\subsubsection{Limited Memory}

At time of writing, Adiar still mainly focuses on handling BDDs that are too large to fit into main
memory. That is, the results from our experiments shown in
\cref{fig:versus.mcnet,fig:versus.relprod} are heavily biased against Adiar. The benefit of doing
so is for the experiments to provide a worst-case comparison between Adiar and other
implementations.

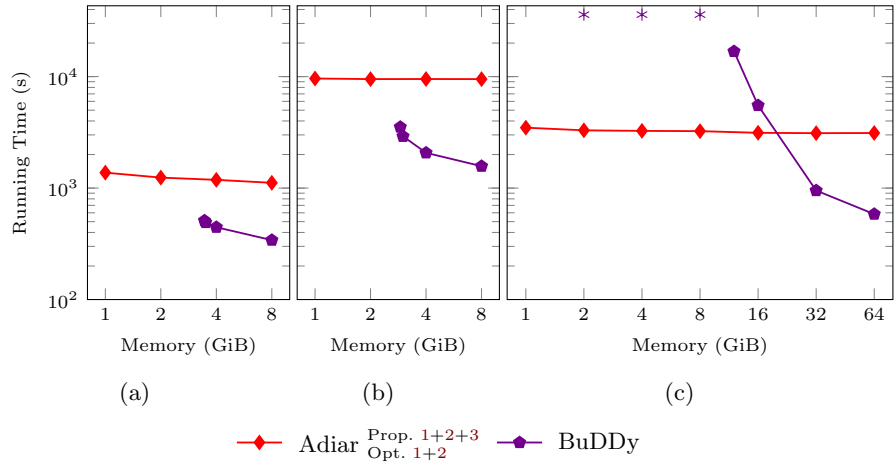
\begin{figure}[t]
  \centering

  \subfloat[] {
    \label{fig:memory:GPUForwardProgress}

    \begin{tikzpicture}
      \begin{axis}[%
        width=0.35\linewidth, height=0.45\linewidth,
        every tick label/.append style={font=\scriptsize},
        xlabel={\scriptsize Memory (GiB)},
        xmin=0.8,
        xtick={1,2,4,8},
        xticklabels={1,2,4,8},
        xmax=10,
        xmode=log,
        ylabel={\scriptsize Running Time (s)},
        ymin=100,
        ymax=43200,
        ytick={100,1000,10000,100000},
        ymode=log,
        grid style={dashed,black!12},
        ]


        \addplot+ [style=plot_adiar]
        table {./data/memory.gpufp_20_a.adiar.tex};

        \addplot+ [style=plot_buddy]
        table {./data/memory.gpufp_20_a.buddy.tex};
      \end{axis}
    \end{tikzpicture}
  }
  \hspace{-16pt}
  \subfloat[] {
    \label{fig:memory:SmartHome}

    \begin{tikzpicture}
      \begin{axis}[%
        width=0.35\linewidth, height=0.45\linewidth,
        every tick label/.append style={font=\scriptsize},
        xlabel={\scriptsize Memory (GiB)},
        xmin=0.8,
        xtick={1,2,4,8},
        xticklabels={1,2,4,8},
        xmax=10,
        xmode=log,
        ymin=100,
        ymax=43200,
        ytick={100,1000,10000,100000},
        yticklabels={,,,},
        ymode=log,
        grid style={dashed,black!12},
        ]


        \addplot+ [style=plot_adiar]
        table {./data/memory.smhome_16.adiar.tex};

        \addplot+ [style=plot_buddy]
        table {./data/memory.smhome_16.buddy.tex};
      \end{axis}
    \end{tikzpicture}
  }
  \hspace{-16pt}
  \subfloat[] {
    \label{fig:memory:ShieldPPPs}

    \begin{tikzpicture}
      \begin{axis}[%
        width=0.55\linewidth, height=0.45\linewidth,
        every tick label/.append style={font=\scriptsize},
        xlabel={\scriptsize Memory (GiB)},
        xmin=0.8,
        xtick={1,2,4,8,16,32,64},
        xticklabels={1,2,4,8,16,32,64},
        xmax=80,
        xmode=log,
        ymin=100,
        ymax=43200,
        ytick={100,1000,10000,100000},
        yticklabels={,,,},
        ymode=log,
        grid style={dashed,black!12},
        ]

        \addplot+ [style=x_buddy, forget plot] coordinates {
          (2, 36000)
          (4, 36000)
          (8, 36000)
        };

        \addplot+ [style=plot_adiar]
        table {./data/memory.shield_s_ppp_010_a.adiar.tex};

        \addplot+ [style=plot_buddy]
        table {./data/memory.shield_s_ppp_010_a.buddy.tex};
      \end{axis}
    \end{tikzpicture}
  }

  \smallskip

  \if\arxiv1%
  \begin{tikzpicture}
    \begin{customlegend}[
      legend columns=-1,
      legend style={draw=none,column sep=1ex},
      legend entries={
        {Adiar $^{\text{Prop.~\ref{thm:naive}+\ref{thm:reduce}+\ref{thm:affine}}}_{\text{Opt.~\ref{opt:and-transpose}+\ref{opt:prune}}}$},
        BuDDy
      }
      ]
      \addlegendimage{style=plot_adiar}
      \addlegendimage{style=plot_buddy}
    \end{customlegend}
  \end{tikzpicture}
  \else
  \begin{tikzpicture}
    \begin{customlegend}[
      legend columns=-1,
      legend style={draw=none,column sep=1ex},
      legend entries={
        Adiar,
        BuDDy
      }
      ]
      \addlegendimage{style=plot_adiar}
      \addlegendimage{style=plot_buddy}
    \end{customlegend}
  \end{tikzpicture}
  \fi

  \caption{Running time for \emph{Next} depending on available internal memory. Timeouts are
    marked as stars. The models are \texttt{GPUForwardProgress} 20a
    (\ref{fig:memory:GPUForwardProgress}), \texttt{SmartHome} 16 (\ref{fig:memory:SmartHome}), and
    \texttt{ShieldPPPs} 10a (\ref{fig:memory:ShieldPPPs}) with a state space BDD of magnitude $2^{25}$.}
  \label{fig:memory}
\end{figure}

To turn the tables, \cref{fig:memory} shows for each of the three models of the \emph{Next}
benchmark the running time of Adiar and BuDDy when given different amounts of internal memory. Each
of these data points were only measured once. In \texttt{GPUForwardProgress}~20a
(\cref{fig:memory:GPUForwardProgress}) and \texttt{SmartHome}~16 (\cref{fig:memory:SmartHome}),
BuDDy is unable to complete its computation with less than 3.45 and 2.9 GiB of memory, respectively.
As the memory increases beyond this point, its performance quickly increases. For the
\texttt{ShieldPPPs}~10a model (\cref{fig:memory:ShieldPPPs}), BuDDy needs more than 12~h to finish
its computation when given less than 12~GiB of memory. This is due to repeated need to do garbage
collection which in turn clears the computation cache and makes it repeat previous computations.
Adiar, on the other hand, does not use any memoisation; its priority queues are implicitly doing
double-duty as a computation cache \cite{Soelvsten2022:TACAS}. It neither suffers from garbage
collection: BDD nodes are stored in files on disk where the entire file is deleted when it is not
needed anymore \cite{Soelvsten2022:TACAS}. Yet, Adiar's use of I/O-efficient files and priority
queues allows it to only slow down by $23\%$ as its internal memory is decreased far below what
BuDDy needs.

\subsection{RQ~\ref{rq:competitors:breadth-first}: Comparison to CAL (Breadth-first Implementation)}

\Cref{tab:versus,fig:versus.mcnet,fig:versus.relprod} also include measurements of the breadth-first
BDD package, CAL~\cite{Sanghavi1996}. Similar to our previous results in \cite{Soelvsten2024:arXiv},
CAL's running time is on-par with the depth-first implementations for smaller instances but then
quickly deteriorates as the BDDs grow larger. This is due to its use of conventional depth-first
recursion for the smaller instances \cite{Soelvsten2023:ATVA}.

The overhead of its breadth-first algorithms, on the other hand, makes it much slower than Adiar for
the larger instances in \cref{fig:versus.mcnet:reachability,fig:versus.mcnet:deadlock}. This
overhead is especially apparent in the \emph{Next} and \emph{Prev}. Here, CAL times out after 12~h
for all instances.

\section{Conclusions and Future Work} \label{sec:conclusion}

In this work, we have succesfully designed an I/O-efficient relational product in Adiar which
enables BDD-based symbolic model checking beyond the limits of the machine's memory. These
algorithms, as they exist today, are much faster at processing large BDDs than a conventional BDD
package that either needs to repeatedly run garbage collection to stay within its memory limits or
that needs to offload its BDDs to the disk by means of swap memory.
In fact, Adiar's running time is virtually independent of its memory limits.

\subsection*{Optimisations for Relational Product}

Towards designing this I/O-efficient relational product, we have identified multiple optimisations
particular to the design of Adiar's algorithms. Based on our results in \cref{sec:experiments}, we
recommend the following with respect to the optimisations we have proposed in \cref{sec:relprod}.

\begin{recommendation}
  \Cref{thm:naive,thm:reduce} and Optimisations \ref{opt:and-transpose} and \ref{opt:prune} should
  be included in an I/O-efficient relational product. A safer alternative to \cref{thm:affine}
  may be worth the additional implementation effort.
\end{recommendation}

The gap in running time between Adiar and depth-first implementations is for larger instances about
one order of magnitude. This gap is about twice the size than our previous results in
\cite{Soelvsten2022:TACAS,Soelvsten2023:NFM,Soelvsten2023:ATVA,Soelvsten2024:arXiv}: in those
benchmarks, Adiar is only up to a factor 4 slower than the conventional approach. In
\cite{Dijk2015}, a combined \texttt{AndExists} BDD operation roughly halves the running time for
depth-first implementations. The optimisations in \cref{sec:and-exists} aim to also create a
combined \texttt{AndExists} within the time-forward processing paradigm of Adiar. The gap in
performance indicates more ideas are needed.

\begin{recommendation}
  Future work on an I/O-efficient relational product should further integrate the \texttt{Exists}
  within the \texttt{And}. To this end, it may be worth investigating alternatives to
  Optimisation~\ref{opt:prune}; for example, partial quantification in \cite{Soelvsten2024:arXiv}
  may prove performant in the context of symbolic model checking with BDDs.
\end{recommendation}

\noindent Additional measurements have indicated that for larger instances, the \texttt{And} (even
without the optimisations in \cref{sec:and-exists}) is responsible for less than $1/10$th of the
total computation time.

\begin{recommendation}
  Future work on an I/O-efficient relational product should especially focus on improving the
  I/O-efficient \texttt{Exists} operation's performance.
\end{recommendation}

\subsection*{Small-scale BDD Computation}

The BDD library CAL~\cite{Sanghavi1996} (based on \cite{Ochi1993,Ashar1994}) is to the best of our
knowledge the only other implementation aiming at efficiently computing on BDDs stored on the disk.
In practice, it switches from the conventional depth-first to the breadth-first algorithms described
in \cite{Sanghavi1996} when the size of the BDDs exceed $2^{19}$ BDD nodes
\cite{Soelvsten2023:ATVA}. Yet, these breadth-first algorithms are in practice one or more orders of
magnitude slower than Adiar's algorithms. This makes Adiar vastly outperform CAL at solving our
benchmarks as the BDDs involved have grown large enough.

But, similar to our previous results in
\cite{Soelvsten2022:TACAS,Soelvsten2023:ATVA,Soelvsten2024:arXiv}, the gap in performance between
Adiar and depth-first implementations is still several orders of magnitude for smaller instances. As
the BDDs in model checking tasks start out small and only grow slowly, this overhead bars the
current version of Adiar from being applicable for model checking in practice.

\begin{recommendation}
  Similar to CAL, one should combine Adiar's time-forward processing algorithms with the
  conventional depth-first approach. Both \cite{Soelvsten2023:ATVA} and \cite{Soelvsten2024:SPIN}
  have paved the way for getting Adiar's algorithms to work in tandem with a unique node table.
\end{recommendation}

\noindent Yet, doing such a vast engineering task is outside the scope of this paper. We leave the
task of combining the strengths of depth-first recursion and time-forward processing as future work.


\subsection*{Generic Variable Substitution}

We have in this work only focused on monotone variable substitution. This still leaves a variable
substitution that is I/O-efficient for the general case as an open problem, i.e.\ an algorithm
capable of handling substitutions that change the order of the BDD's levels. If it is possible to
design such an algorithm which is efficient with respect to time, space, and I/Os then it can be
used as the backbone for novel external memory variable reordering algorithms.

\section*{Acknowledgements}

Thanks to the Centre for Scientific Computing, Aarhus, (\href{https://www.cscaa.dk/}{www.cscaa.dk/})
for running our benchmarks on their Grendel cluster. Furthermore, thanks to Nils Husung for
previously having added LibBDD to the BDD Benchmarking
Suite~\cite{Soelvsten2022:TACAS,Soelvsten2021:Zenodo} as part of \cite{Husung2024}; this turned out
to be vital for creating inputs for the \emph{Next} and \emph{Prev} benchmarks.

\clearpage
\section*{Data Availability Statement}

Our experiments are created based on the BDD Benchmarking
Suite~\cite{Soelvsten2022:TACAS,Soelvsten2021:Zenodo}. The obtained data and its analysis are
available in our accompanying artifact \cite{Soelvsten2025:Zenodo}. This artifact also includes the
inputs, pre-compiled binaries, and scripts to recreate our experiments. Finally, it also provides
the source code to read, change, and recompile said binaries.


%
%
%

\DeclareRobustCommand{\VAN}[3]{#2} 

\bibliographystyle{splncs04}
\bibliography{references}

\iftrue

  \clearpage
  \appendix
  \section{SCC Decomposition}

  We also implemented the following third foundational model checking operation for our experiments in
  \cref{sec:experiments}.
  \begin{itemize}
  \item\emph{SCC Decomposition}:

    The reachable states are decomposed into their strongly connected components (SCCs) via the
    \textsc{Chain} algorithm~\cite{Larsen2023}.
    This uses both \texttt{bdd\_relnext} and \texttt{bdd\_relprev} up to a polynomial number of times
    with respect to the model and its state space.
    As per \cite{Jakobsen2024}, each deadlock state identified during the \emph{Deadlock} stage is an
    SCC which can be skipped.
  \end{itemize}
  The 75 model instances were in particular the ones where Adiar seemed able to consecutively solve
  \emph{Reachability}, \emph{Deadlock}, and \emph{SCC Decomposition} within 2 days. Likewise, the
  results in measurements in \cref{tab:versus,fig:versus.mcnet} were run with a timeout of 48~h.

  This benchmark was left out of the above experiments, since they do not add new knowledge. In
  particular, the BDD size stayed too small to be within Adiar's current scope. If anything, the
  results shown below further cements the need for incorporating the conventional depth-first
  algorithms with Adiar's I/O-efficient time-forward processing.

  \subsection{RQ~\ref{rq:optimisations}: Effect of the Optimisations}

  \Cref{tab:versions:scc,fig:versions.mcnet:scc} shows the effect of each optimisation on this
  particular benchmark. Similar to \emph{Reachability} and \emph{Deadlock}, the \emph{SCC
    Decomposition} is positively affected by the optimisations due to the small BDD size.

  Optimisation~\ref{opt:and-transpose}~(\markSkipTranspose) improves the total running time for
  \emph{SCC Decomposition} with $20.7\%$. Optimisation~\ref{opt:prune}~(\markPruningAnd) further
  improves the running time by $14.0\%$ and \Cref{thm:reduce}~(\markExistsReplace) by $4.8\%$.
  Finally, \cref{thm:affine}~(\markShiftReplace) improves the total running time by $1.5\%$

  \begin{table}[b]
  \centering

  \caption{Total running time (seconds) of each version of Adiar on SCC Decomposition. The \# column
    indicates the number of instances that were solved by all five versions.}
  \label{tab:versions:scc}

  \bgroup
  \def\arraystretch{1.1}
  \setlength\tabcolsep{5pt}

  \begin{tabular}{l||c||rrrrr}
                          & \#  & {---$^{\text{Prop.~\ref{thm:naive}}}$}
                                             & {\markSkipTranspose$^{\text{Prop.~\ref{thm:naive}}}_{\text{Opt.~\ref{opt:and-transpose}}}$}
                                                          & {\markPruningAnd$^{\text{Prop.~\ref{thm:naive}}}_{\text{Opt.~\ref{opt:and-transpose}+\ref{opt:prune}}}$}
                                                                       & {\markExistsReplace$^{\text{Prop.~\ref{thm:naive}+\ref{thm:reduce}}}_{\text{Opt.~\ref{opt:and-transpose}+\ref{opt:prune}}}$}
                                                                                    & {\markShiftReplace$^{\text{Prop.~\ref{thm:naive}+\ref{thm:reduce}+\ref{thm:affine}}}_{\text{Opt.~\ref{opt:and-transpose}+\ref{opt:prune}}}$}
    \\ \hline \hline
    \emph{SCC}            & 144 & 734040.5   & 562074.5   & 483553.0   & 460388.5   & 453712.4
  \end{tabular}
  \egroup
\end{table}
  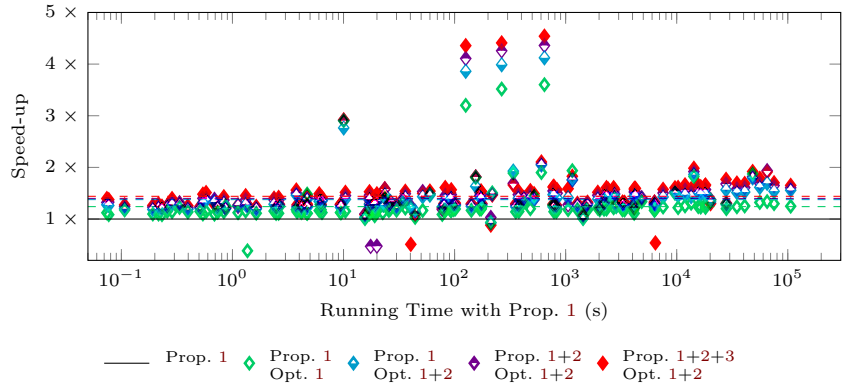
\begin{figure}[t]
  \centering

  \begin{tikzpicture}
    \begin{axis}[%
      width=0.95\linewidth, height=0.4\linewidth,
      every tick label/.append style={font=\scriptsize},
      xlabel={\scriptsize Running Time with Prop.~\ref{thm:naive} (s)},
      xmin=0.05,
      xtick={0.01,0.1,1,10,100,1000,10000,100000,1000000},
      xmax=300000,
      xmode=log,
      ylabel={\scriptsize Speed-up},
      ymin=0.2,
      ymax=5.0,
      ytick={1,2,3,4,5},
      yticklabels={
        1 $\times$,
        2 $\times$,
        3 $\times$,
        4 $\times$,
        5 $\times$,
      },
      grid style={dashed,black!12},
      ]

      \addplot[domain=0.05:300000, samples=8, color=black]
      {1};

      \begin{scope}[blend mode=soft light]
        \addplot+ [style=dots_adiar-skip_transpose]
        table {./data/scc.adiar-skip_transpose.tex};

        \addplot+ [style=dots_adiar-pruning_and]
        table {./data/scc.adiar-pruning_and.tex};

        \addplot+ [style=dots_adiar-exists_replace]
        table {./data/scc.adiar-exists_replace.tex};

        \addplot+ [style=dots_adiar-shift_replace]
        table {./data/scc.adiar-shift_replace.tex};

        \addplot[domain=0.005:300000, samples=8, style=dashed, opacity=0.5, color=adiar-skip_transpose]
        {1.240490326254832};
        \addplot[domain=0.005:300000, samples=8, style=dashed, opacity=0.5, color=adiar-pruning_and]
        {1.3761794247188053};
        \addplot[domain=0.005:300000, samples=8, style=dashed, opacity=0.5, color=adiar-exists_replace]
        {1.3994938665975531};
        \addplot[domain=0.005:300000, samples=8, style=dashed, opacity=0.5, color=red]
        {1.440069060711158};
      \end{scope}
    \end{axis}
  \end{tikzpicture}

  \smallskip

  \begin{tikzpicture}
    \begin{customlegend}[
      legend columns=-1,
      legend style={draw=none,column sep=1ex},
      legend entries={
        {$^{\text{Prop.~\ref{thm:naive}}}$},
        {$^{\text{Prop.~\ref{thm:naive}}}_{\text{Opt.~\ref{opt:and-transpose}}}$},
        {$^{\text{Prop.~\ref{thm:naive}}}_{\text{Opt.~\ref{opt:and-transpose}+\ref{opt:prune}}}$},
        {$^{\text{Prop.~\ref{thm:naive}+\ref{thm:reduce}}}_{\text{Opt.~\ref{opt:and-transpose}+\ref{opt:prune}}}$},
        {$^{\text{Prop.~\ref{thm:naive}+\ref{thm:reduce}+\ref{thm:affine}}}_{\text{Opt.~\ref{opt:and-transpose}+\ref{opt:prune}}}$},
      }
      ]
      \addlegendimage{style=dots_adiar-naive}
      \addlegendimage{style=dots_adiar-skip_transpose}
      \addlegendimage{style=dots_adiar-pruning_and}
      \addlegendimage{style=dots_adiar-exists_replace}
      \addlegendimage{style=dots_adiar-shift_replace}
    \end{customlegend}
  \end{tikzpicture}

  \caption{Impact of optimisations on SCC Decomposition running time. Averages are drawn as dashed
    lines.}
  \label{fig:versions.mcnet:scc}
\end{figure}

  \subsection{RQ~\ref{rq:competitors:depth-first}: Comparison to Depth-first Implementations}

  \Cref{tab:versus:scc,fig:versus.mcnet:scc} shows the running time of Adiar and the depth-first BDD
  packages described in \cref{sec:experiments}. Here, the gap between Adiar and the depth-first BDD
  packages clearly show that the size of the BDDs stay small. This is due to the fact that the Chain
  algorithm in \cite{Larsen2023} repeatedly explores the (remaining) set of states from single pivot
  states.

  \if\arxiv1%
  \begin{table}[b!]
\else
  \begin{table}[t]
\fi
  \centering

  \caption{Total Running time of Adiar (with Prop.~\ref{thm:naive}, \ref{thm:reduce}, and
    \ref{thm:affine} and Opt.~\ref{opt:and-transpose} and \ref{opt:prune}) and other implementations
    of BDDs for SCC decomposition. The \# column indicates the number of instances that were solved
    by all BDD packages.}
  \label{tab:versus:scc}

  \bgroup
  \def\arraystretch{1.1}
  \setlength\tabcolsep{5pt}

  \begin{tabular}{l||c||rrrrrr}
                          & \#  & Adiar      & BuDDy   & CAL      & CUDD     & LibBDD   & Sylvan
    \\ \hline \hline
    \emph{SCC}            & 147 & 567188.5   & 680.6   & 22679.93 & 1840.0   & 10201.9  & 862.2
    \\ \hline
  \end{tabular}
  \egroup
\end{table}
  \begin{figure}[b!]
  \centering

  \begin{tikzpicture}
    \begin{axis}[%
      width=0.95\linewidth, height=0.4\linewidth,
      every tick label/.append style={font=\scriptsize},
      xlabel={\scriptsize Other (s)},
      xmin=0.001,
      xtick={0.001,0.01,0.1,1,10,100,1000,10000,100000},
      xmax=200000,
      xmode=log,
      ylabel={\scriptsize Adiar (s)},
      ymin=0.001,
      ymax=200000,
      ytick={0.001,0.01,0.1,1,10,100,1000,10000,100000},
      ymode=log,
      grid style={dashed,black!12},
      ]

      \addplot[domain=0.001:200000, samples=8, color=black, dotted]
      {0.01*x};
      \addplot[domain=0.001:200000, samples=8, color=black, dashed]
      {0.1*x};
      \addplot[domain=0.001:200000, samples=8, color=black]
      {x};
      \addplot[domain=0.001:20000, samples=8, color=black, dashed]
      {10*x};
      \addplot[domain=0.001:2000, samples=8, color=black, dotted]
      {100*x};

      \begin{scope}[blend mode=soft light]
        \addplot+ [style=dots_buddy]
        table {./data/scc.buddy.tex};

        \addplot+ [style=dots_cal]
        table {./data/scc.cal.tex};

        \addplot+ [style=dots_cudd]
        table {./data/scc.cudd.tex};

        \addplot+ [style=dots_libbdd]
        table {./data/scc.libbdd.tex};

        \addplot+ [style=dots_sylvan]
        table {./data/scc.sylvan.tex};
      \end{scope}
    \end{axis}
  \end{tikzpicture}

  \begin{tikzpicture}
    \begin{customlegend}[
      legend columns=-1,
      legend style={draw=none,column sep=1ex},
      legend entries={
        BuDDy,
        CAL,
        CUDD,
        LibBDD,
        Sylvan
      }
      ]
      \addlegendimage{style=dots_buddy}
      \addlegendimage{style=dots_cal}
      \addlegendimage{style=dots_cudd}
      \addlegendimage{style=dots_libbdd}
      \addlegendimage{style=dots_sylvan}
    \end{customlegend}
  \end{tikzpicture}

  \caption{Running time of Adiar
    on SCC decomposition compared to other implementations. Timeouts are shown as markers at the top
    and the right.}
  \label{fig:versus.mcnet:scc}
\end{figure}
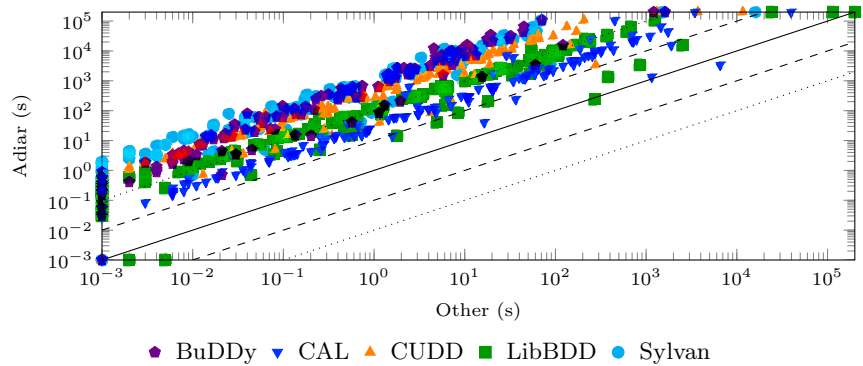

  \subsection{RQ~\ref{rq:competitors:breadth-first}: Comparison to CAL (Breadth-first Implementation)}

  \Cref{tab:versus:scc,fig:versus.mcnet:scc} also shows the running time of CAL for solving the SCC
  decomposition taks. Whereas CAL becomes slower than Adiar for the largest instances in
  \cref{fig:versus.mcnet:reachability,fig:versus.mcnet:deadlock}, the same is not evident in
  \cref{fig:versus.mcnet:scc}. This is further testament to the small size of the BDDs in this
  benchmark.

\fi

\end{document}